\documentclass[final,copyright,creativecommons]{eptcs}


\usepackage{ucs}
\usepackage[utf8x]{inputenc}
\usepackage{amsfonts,amssymb,mathtools,amsthm}
\usepackage{xspace}
\usepackage{nicefrac}
\usepackage{graphicx}
\usepackage{pifont}
\usepackage{cancel}
\usepackage{tikz}
\usepackage{setspace}
\usepackage{stackrel}
\usepackage{xifthen}
\usepackage{stmaryrd}
\usepackage{multirow}
\usepackage[normalem]{ulem}
\usepackage{float}
\usepackage{paralist}
\usepackage{boxedminipage}


\usepackage{commands}


\title{Matching in the Pi-Calculus}
\author{Kirstin Peters \qquad\qquad Tsvetelina Yonova-Karbe \qquad\qquad Uwe Nestmann
\institute{TU Berlin, Germany}}


\begin{document}

\maketitle


\begin{abstract}
	We study whether, in the \piCal, the {match prefix}|{a} conditional operator testing two names for (syntactic) {equality}|{is} expressible via the other operators.
	Previously, Carbone and Maffeis proved that matching is \emph{not} expressible this way under rather strong requirements (preservation and reflection of observables).
	Later on, Gorla developed a by now widely-tested set of criteria for encodings that allows much more freedom (\eg instead of direct translations of observables it allows comparison of calculi with respect to reachability of successful states).
    In this paper, we offer a considerably stronger separation result on the non-expressibility of matching using only Gorla's relaxed requirements.
\end{abstract}


\section{Introduction}

In process calculi matching is a simple mechanism to trigger a process if two names are syntactically equal. The match prefix $ \match{a}{b}P $ in the $\pi$-calculus works as a conditional guard. If the names $ a $ and $ b $ are identical the process behaves as $ P $. Otherwise, the term cannot reduce further.

\paragraph{Motivation.} The principle of matching two names in order to reduce a term is also present in another form in any calculus with channel-based synchronisation, like CCS or the $\pi$-calculus.
The rule for communication demands identical (i.e.~matching) input/output channel names to be used by parallel processes. For example, the term $ \overline{a} \mid a.P $ may communicate on $ a $, but the term $ \overline{a} \mid b.P $ cannot communicate at all. Thus, the \piCal already contains a ``distributed'' form of the match prefix.\footnote{Of course, this observation extends to Linda-like tuple-based communication, and even to Actor-like message routing according to the matching object identity.} However, it is also an ``unprotected'' and therefore non-deterministic form of matching, as $ \overline{a} \mid a.P \mid a.Q $ allows for two different communications. This raises the natural question whether the match prefix can be encoded using the other operations of the calculus, or whether it is a basic construct.
Here we show that communication is indeed the \piCal construct that is closest to the match prefix.
Accordingly an encoding of the match prefix would need to translate the prefix into a (set of) communication step(s) on links that result from the translation of the match variables. These links have to be free|to allow for a guarding input to receive a value for a match variable|but they also have to be bound|to avoid unintended interactions between parallel match encodings. This kind of binding cannot be simulated by a \piCal operator different from the match prefix.
Thus the match prefix is a basic construct of the \piCal and cannot be encoded.
Note that, as shown by the use of the match prefix \eg in \cite{milnerParrowWalker92} for a sound axiomatisation of late congruence, in \cite{sangiorgi96} for a complete axiomatisation of open equivalence, {or}|{more recently}|{in} \cite{giunti13} for a session pi-calculus, the match prefix is regarded as useful, \ie it allows for applications that without the match prefix are not possible or more complicated to achieve.
Thus a better understanding of the nature of the match prefix contributes to current research.

\paragraph{Quality criteria.} Of course, we are not interested in trivial or meaningless encodings. Instead we consider only those encodings that ensure that the original term and its encoding show to some extent the same abstract behaviour. To analyse the quality of encodings and to rule out trivial or meaningless encodings, they are evaluated \wrt a set of quality criteria.
Note that stricter criteria that rule out more encoding attempts strengthen an \emph{encodability result}, \ie the proof of the existence of an encoding between two languages that respects the criteria. A stronger encodability result reveals a closer connection between the considered languages.
In contrast weaker criteria strengthen a \emph{separation result}, \ie the proof of the non-existence of an encoding between two languages \wrt the criteria. A stronger separation result illuminates a conceptional difference between two languages, \ie some kind of behaviour of the source language that cannot be simulated by the target language.
Unfortunately there is no consensus about what properties make an encoding ``good'' or ``good enough'' to compare two languages (compare \eg~\cite{parrow08}). Instead we find separation results as well as encodability results with respect to very different conditions, which naturally leads to incomparable results.
Among these conditions, a widely used criterion is \emph{full abstraction}, \ie the preservation and reflection of equivalences associated to the two compared languages. There are lots of different equivalences in the range of \piCal variants. Since full abstraction depends, by definition, strongly on the chosen equivalences, a variation in the respective choice may change an encodability result into a separation result, or vice versa \cite{gorlaNestmann}. Unfortunately, there is neither a common agreement about what kinds of equivalence are well suited for language comparison{|}again, the results are often incomparable.
To overcome these problems, and to form a more robust and uniform approach for language comparison, Gorla \cite{gorla} identifies five criteria as being well suited for separation as well as encodability results. By now these criteria are widely-tested (see \eg~\cite{gorla10}). Here, we rely on these criteria to measure the quality of encodings between variants of the \piCal.
Compositionality and name invariance stipulate structural conditions on a valid encoding. Operational correspondence requires that a valid encoding preserves and reflects the executions of a source term. Divergence reflection states that a valid encoding shall not exhibit divergent behaviour, unless it was already present in the source term. Finally, success sensitiveness requires that a source term and its encoding have exactly the same potential to reach a successful state.

\paragraph{Previous Results.} The question about the encodability of the match prefix is not a new one. In \cite{vig} Philips and Vigliotti proposed an encoding within the mobile ambient calculus (\cite{cardelliGordon00}). The \piCal as target language was considered by Carbone and Maffeis. They proved in \cite{carbone} that there exists no encoding of the \piCal into the \piCal (with only guarded choice and) without the match prefix. However the quality criteria used in \cite{carbone} are more restrictive than the criteria here. In particular they assume that visible communication links, \ie observables, are preserved and reflected by the encoding, \ie a source term and its encoding must have the same observables. This criterion is very limiting (\ie strict) even for an encoding between two variants of the same calculus. Thus, by using weaker quality criteria, we strengthen the separation result presented in \cite{carbone}. Note that we use \cite{carbone} as a base and starting point for our result. We discuss the differences to the proofs of \cite{carbone} in Section~\ref{sec:comparison}.
In the same paper Carbone and Maffeis show that the match prefix can be encoded by polyadic synchronisation.
Another positive result for a variant of the \piCal is presented by Vivas in \cite{vivas}. There, a modified version of the \piCal with a new operator, called blocking, is used to encode the match prefix.
In \cite{bodeiDeganoPriami05} the input prefix of the \piCal is replaced by a selective input that allows for communication only if the transmitted value is contained in a set of names specified in the selective input prefix. Accordingly selective input can be used as conditional guard, which can replace the match prefix. We discuss these encoding approaches and how they are related to our separation result in Section~\ref{sec:encodeMatchInOtherCalculi}.

\paragraph{Overview.} We start with an introduction of the considered variants of the \piCal in \S\ref{sec:processCalculi}. Then \S\ref{sec:quality} introduces the framework of \cite{gorla} to measure the quality of an encoding. Our separation result is presented in \S\ref{sec:encodeMatch}. In \S\ref{sec:discussion} we discuss the relation of our result to related work. We conclude with \S\ref{sec:conclusions}.
The missing proofs and additional material can be found in \cite{pynT14}.


\section{The Pi-Calculus}
\label{sec:processCalculi}

Within this paper we compare two different variants of the \piCal|the full \piCal with (free choice and) the match prefix (\piT) and its variant without the match prefix (\piNM)|as they are described \eg in \cite{milnerParrowWalker92,Milner1999}.

Let $ \mc $ denote a countably infinite set of names and $ \overline{\mc} $ the set of co-names, \ie $ \overline{\mc} = \Set{ \overline{n} \mid n \in \mc } $. We use lower case letters $ a, a', a_1, \ldots, x, y, \ldots $ to range over names.
Moreover let $ \mc^k $ denote the set of vectors of names of length $ k $. Let $ \mc^* $ be the set of finite vectors of names. And let $ \proj{\tilde{x}}{i} = x_i $ whenever $ \tilde{x} = x_1, \ldots, x_n $ and $ 1 \leq i \leq n $.
For simplicity we adapt some set notations to deal with vectors of names, \eg $ \length{\tilde{x}} $ is the length of the vector $ \tilde{x} $, $ a \in \tilde{x} $ holds if the name $ a $ occurs in the vector $ \tilde{x} $, and $ \tilde{x} \cap \tilde{y} = \emptyset $ holds if the vectors $ \tilde{x} $ and $ \tilde{y} $ do not share a name.

\begin{definition}[Syntax]
	The set of process terms of the \emph{full \piCal}, denoted by $ \procPi $, is given by
	\begin{align*}
		P \deffTerms & \nullTerm \sep \piInput{x}{z}{P} \sep \piOutput{x}{y}{P} \sep \tau.P \sep \match{a}{b}P \sep\\
		& P_1 + P_2 \sep P_1 \mid P_2 \sep \Res{z}{P} \sep !P \sep \success
	\end{align*}
	
	\noindent
	where $ a, b, x, y, z \in \mc $.
	The processes of its subcalculus \piNM, denoted by $ \procPiNoMatch $, are given by the same grammar without the match prefix $ \match{a}{b}P $.
\end{definition}

\noindent
The term $ \success $ denotes \emph{success} (or \emph{successful termination}). It is introduced in order to compare the abstract behaviour of terms in different process calculi as described in Section~\ref{sec:quality}.
The interpretation of the remaining operators is as usual.
Sometimes we denote the $ a $ and $ b $ in $ \match{a}{b}P $ as \emph{match variables}.
We use $ P, P', P_1, \ldots, Q, R, \ldots $ to range over processes. Let $ \freeNames{P} $, $ \boundNames{P} $, and $ \names{P} $ denote the sets of \emph{free names} in $ P $, \emph{bound names} in $ P $, and all \emph{names} occurring in $ P $, respectively. Their definitions are completely standard, \ie names are bound by restriction and as parameter of input and $ \names{P} = \freeNames{P} \cup \boundNames{P} $ for all $ P $.

We use $ \sigma $, $ \sigma' $, $ \sigma_1 $, \ldots to range over substitutions. A substitution is a finite mapping from names to names defined by a set $ \Set{ \subs{y_1}{x_1}, \ldots, \subs{y_n}{x_n} } $ of renamings, where the $ x_1, \ldots, x_n $ are pairwise distinct.
$ \Set{ \subs{y_1}{x_1}, \ldots, \subs{y_n}{x_n} }\left( P \right) $ is defined as the result of simultaneously replacing all free occurrences of $ x_i $ by $ y_i $ for $ i \in \Set{ 1, \ldots, n } $, possibly applying alpha-conversion to avoid capture or name clashes. For all names $ \mc \setminus \Set{ x_1, \ldots, x_n } $ the substitution behaves as the identity mapping, \ie as empty substitution.
We naturally extend substitutions to co-names, \ie $ \forall \sigma : \mc \to \mc \pkt \forall n \in \mc \pkt \sigma\!\left( \overline{n} \right) = \overline{\sigma\!\left( n \right)} $.

As suggested in \cite{gorla} we use a \emph{reduction semantics} to reason about the behaviour of \piT and \piNM.
The \emph{reduction semantics} of \piT and \piNM are jointly given by the transition rules\vspace{0.75em}
\begin{center}
	$ \tau.P \step P \hspace*{2em} \piOutput{x}{y}{P} + P' \mid \piInput{x}{z}{Q} + Q' \step P \mid \Set{ \subst{y}{z} }Q \hspace*{2em} \dfrac{P \step P'}{P + Q \step P'} $\vspace{0.75em}\\
	$ \dfrac{P \step P'}{P \mid Q \step P' \mid Q} \hspace*{2em} \dfrac{P \step P'}{\Res{n}{P} \step \Res{n}{P'}} \hspace*{2em} \dfrac{P \equiv Q \quad \quad Q \step Q' \quad Q' \equiv P'}{P \step P'} $\vspace{0.75em}
\end{center}
where \emph{structural congruence}, denoted by $ \equiv $, is the least congruence given by the rules:\vspace{0.75em}
\begin{align*}
	\begin{array}{r@{\;}c@{\;}l@{\hspace{2em}}r@{\;}c@{\;}l@{\hspace{2em}}r@{\;}c@{\;}l}
		P & \equiv & Q \quad \text{ if } P \equivAlpha Q & \match{a}{a}P & \equiv & P & !P & \equiv & P \mid \; !P\\
		P + \nullTerm & \equiv & P & P + Q & \equiv & Q + P & P + \left( Q + R \right) & \equiv & \left( P + Q \right) + R\\
		P \mid \nullTerm & \equiv & P & P \mid Q & \equiv & Q \mid P & P \mid \left( Q \mid R \right) & \equiv & \left( P \mid Q \right) \mid R\\
		\Res{z}{\nullTerm} & \equiv & \nullTerm & \Res{z}{\Res{w}{P}} & \equiv & \Res{w}{\Res{z}{P}} & \Res{z}{\left( P \mid Q \right)} & \equiv & P \mid \Res{z}{Q} \quad \text{ if } z \notin \freeNames{P}
	\end{array}
\end{align*}
Here $ P \equivAlpha Q $, where $ \equivAlpha $ denotes alpha-conversion, holds if $ Q $ can be obtained from $ P $ by renaming bound names in $ P $, silently avoiding name clashes.
Note that the structural congruence rule $ \match{a}{a}P \equiv P $ can be applied only in the full \piCal. It is this structural congruence rule (in combination with the last transition rule) that defines the semantics of the match prefix. However we can similarly define the semantics of the match prefix with the reduction rule $ \match{a}{a}P \step P $ without any influences on our results.
A reduction step $ P \step P' $ then denotes either a communication between an input and output on the same link or an internal step.
Let $ P \step $ (and $ P \noStep $) denote the existence (and non-existence) of a step from $ P $, \ie there is (no) $ P' $ such that $ P \step P' $. Moreover, let $ \steps $ be the reflexive and transitive closure of $ \step $. We write $ P \step^{\omega} $ if $ P $ can perform an infinite sequence of reduction steps. A sequence of reduction steps starting in a term $ P $ is called an \emph{execution} of $ P $. An execution is either finite, as $ P_0 \step P_1 \step \ldots \step P_n $, or infinite. A finite execution $ P_0 \steps P_n $ is \emph{maximal} if it cannot be further extended, \ie if $ P_n \noStep $, otherwise it is \emph{partial}.

Traditionally a process term is considered as successful if it has an unguarded occurrence of success (see \eg \cite{gorla}). This is usually formalised as $ \exists P' \pkt P \equiv \success \mid P' $. Because of free choice, we have to adapt the usual definitions of the reachability of success to deal with arbitrary nestings of choice and parallel composition. To do so we recursively define the notion of unguarded subterms.

\begin{definition}[Unguarded Subterms]
	\label{def:unguardedSubterms}
	Let $ P \in \procPi $ or $ P \in \procPiNoMatch $. The \emph{set of unguarded subterms of} $ P $, denoted by $ \ungSub{P} $, is recursively defined as:
	\vspace*{-0.6em}
	\begin{align*}
  		\begin{cases}
  			\Set{ P } \cup \ungSub{Q} & \text{, if } P = \match{a}{a}Q\\
  			\Set{ P } \cup \ungSub{Q_1} \cup \ungSub{Q_2} & \text{, if } P = Q_1 + Q_2 \vee P = Q_1 \mid Q_2\\
  			\Set{ P } \cup \ungSub{Q} & \text{, if } P = \Res{z}{Q} \vee P = {!}Q\\
  			\Set{ P } & \text{, otherwise}
  		\end{cases}
	\end{align*}
	\vspace*{-1.1em}
\end{definition}

\noindent
Note that the sets of unguarded subterms can differ for structural congruent terms. Consider for example $ \ungSub{\Res{z}{\piOutput{z}{z}{\nullTerm}}} = \Set{ \Res{z}{\piOutput{z}{z}{\nullTerm}}, \piOutput{z}{z}{\nullTerm} } $ but $ \ungSub{\Res{z'}{\piOutput{z'}{z'}{\nullTerm}}} = \Set{ \Res{z'}{\piOutput{z'}{z'}{\nullTerm}}, \piOutput{z'}{z'}{\nullTerm} } $
or $ \ungSub{\success} = \Set{ \success } $ but $ \ungSub{\success + \nullTerm} = \Set{ \success + \nullTerm, \success, \nullTerm } $.
Similarly, injective substitutions do not distribute over unguarded subterms.
For example $ \Set{ \subst{y}{x} }\!\left( \ungSub{\Res{x}{\piOutput{x}{x}{\nullTerm}}} \right) = \Set{ \Res{x}{\piOutput{x}{x}{\nullTerm}}, \piOutput{y}{y}{\nullTerm} } $ but $ \Set{ \subst{y}{x} }\!\left( \ungSub{\Res{z}{\piOutput{z}{z}{\nullTerm}}} \right) = \Set{ \Res{z}{\piOutput{z}{z}{\nullTerm}}, \piOutput{z}{z}{\nullTerm} } $.
Moreover note that if $ P' $ is an unguarded subterm of $ P $ then also all unguarded subterms of $ P' $ are unguarded subterms of $ P $.

Then a term is successful if it has an unguarded occurrence of success.

\begin{definition}[Reachability of Success]
	Let $ P \in \procPi $ or $ P \in \procPiNoMatch $. Then $ P $ is \emph{successful}, denoted by $ P \hasS $, if $ \success \in \ungSub{P} $.
	\emph{$ P $ reaches success}, denoted by $ P \reachS $, if there is some $ Q $ such that $ P \steps Q $ and $ Q \hasS $.
	Moreover, we write $ P \mustReachSuccessFinite $, if $ P $ reaches success in every finite maximal execution.
	Let $ P \nHasS $ abbreviate $ \neg \left( P \hasS \right) $, $ P \nReachS $ abbreviate $ \neg \left( P \reachS \right) $, and $ P \not\mustReachSuccessFinite $ abbreviate $ \neg \left( P \mustReachSuccessFinite \right) $.
\end{definition}

\noindent
Of course, all proofs in this paper hold similarly for variants of \piT and \piNM with only guarded choice and the traditional definition of a successful term.

The first quality criterion to compare process calculi presented in Section~\ref{sec:quality} is compositionality. It induces the definition of a \piNM-context parametrised on a set of names for each operator of \piT. A \piNM-context $ \context{C}{\hole_1, \ldots, \hole_n} : (\procPiNoMatch)^n \to \procPiNoMatch $ is simply a \piNM-term with $ n $ holes. Putting some \piNM-terms $ P_1, \ldots, P_n $ in this order into the holes $ \hole_1, \ldots, \hole_n $ of the context, respectively, gives a term denoted by $ \context{C}{P_1, \ldots, P_n} $. Note that a context may bind some free names of $ P_1, \ldots, P_n $. The arity of a context is the number of its holes. We extend the definition of unguarded subterms by the equation $ \ungSub{\hole} = \Set{ \hole } $ to deal with contexts.

The standard notion of equivalence to compare terms of the \piCal is bisimulation. An introduction to bisimulations in the \piCal can be found \eg in \cite{milnerParrowWalker92} or \cite{sang}. For our separation result we require such a standard version of reduction bisimulation, denoted by $ \asymp $, on the target language, \ie on \piNM-terms.


\section{Quality of Encodings}
\label{sec:quality}

Within this paper we analyse the existence of an encoding from \piT into \piNM. To measure the quality of such an encoding, Gorla \cite{gorla} suggested five criteria well suited for language comparison. Accordingly, we consider an encoding to be ``valid'', if it satisfies Gorla's five criteria.

We call the tuple $ \mathcal{L} = \left( \mathcal{P}, \step \right) $, where $ \mathcal{P} $ is a set of language terms and $ \step $ is a reduction semantics, a \emph{language}.
An \emph{encoding} from $ \mathcal{L}_1 = \left( \mathcal{P}_1, \step_1 \right) $ into $ \mathcal{L}_2 = \left( \mathcal{P}_2, \step_2 \right) $ is then a tuple $(\enco{\cdot}, \vap, \asymp )$ such that
\begin{compactitem}
	\item $ \enco{\cdot} : \mathcal{P}_1 \rightarrow \mathcal{P}_2 $ is the translating function,
	\item $ \vap : \mc \rightarrow \mck $ is a renaming policy, where $ \vap(u) \cap \vap(v) = \emptyset$ for all $ u \neq v $,
	\item and $ \asymp $ is a behaviour equivalence on $ \mathcal{L}_2 $.
\end{compactitem}
We call $ \lang_1 $ the \emph{source language (calculus)} and $ \lang_2 $ the \emph{target language (calculus)}. Accordingly we call the elements of $ \proc_1 $ \emph{source terms} and the elements of $ \proc_2 $ \emph{target terms}.
We use $ S, S', S_1, \ldots $ ($ T, T', T_1, \ldots $) to range over source (target) terms.

The main ingredient of an encoding is of course the encoding function $ \enco{\cdot} $ that is a mapping from processes to processes.
However, sometimes it is useful to be able to reserve some names to play a special role in an encoding.
Since most process calculi have infinitely many names in their alphabet, it suffices to shift the set of names $ \Set{ x_0, x_1, \ldots } $ of the target language to the set $ \Set{ x_n, x_{n+1}, \ldots } $ to reserve $ n $ names. In order to incorporate such ``shifts'' and similar techniques, Gorla introduces a renaming policy $ \vap $, \ie mapping from names to names that specifies the translation of each name of the source language into a name or vector of names of the target language.
Additionally we assume the existence of a behavioural equivalence $ \asymp $ on the target language that is a reduction bisimulation. Its purpose is to describe the abstract behaviour of a target process, where abstract basically means with respect to the behaviour of the source term. Therefore it should abstract from ``junk'' left over by the encoding.

\cite{gorla} requires $ \vap $ to map all names to a vector of the same length since this way names are treated uniformly, \ie source names cannot be handled differently by an encoding just because the length of the vector, to that $ \vap $ maps to, is different.
The condition that $\vap(u) \cap \vap(v) = \emptyset$ for all $u \neq v$ ensures that the renaming policy does not relate unrelated source term names.

\begin{definition}[Valid Encoding]
	An encoding from $ \lang_1 = \left( \mathcal{P}_1, \step_1 \right) $ into $ \lang_2 = \left( \mathcal{P}_2, \step_2 \right) $ is \emph{valid} if it satisfies:
	\begin{compactenum}
		\item[\textit{Compositionality:}] For each $ k $-ary operator $ \mathtt{op} $ of $ \mathcal{L}_1 $ and all sets of names $ N \subseteq \mc $ there is a $k$-ary context $ \contextOP{C}{N}{\mathtt{op}}{\hole_1, \dots, \hole_k} $ such that for all $ S_1, \dots, S_k \in \proc_1 $ with $ \freeNames{S_1, \dots, S_k} = N $ it holds that $ \enco{\mathtt{op}(S_1, \dots, S_k)} = \contextOP{C}{N}{\mathtt{op}}{\enco{S_1}, \ldots, \enco{S_k}} $.
		\item[\textit{Name Invariance:}] For each $S$ and $\sigma$ it holds that
			\vspace*{-0.5em}
			\begin{align*}
				\enco{\sigma(S)}
				\begin{cases}
					= \sigma'\left( \enco{S} \right) & \text{ if } \sigma \text{ is injective}\\
					\asymp \sigma'\left( \enco{S} \right) & \text{ otherwise}
				\end{cases}
			\end{align*}
			\vspace*{-1em}\\
where $\sigma'$ is such that $ \vap(\sigma(a)) = \sigma'(\vap(a)) $ for every $a \in \mc $.
		\item[\textit{Operational Correspondence:}] $ $
			\begin{compactenum}
				\item[Complete:] For all $ S \steps S' $, it holds that $ \enco{S} \steps \asymp \enco{S'} $.
				\item[Sound:] For all $ \enco{S} \steps T $, there is $ S' $ such that $ S \steps S' $ and $ T \steps \asymp \enco{S'} $.
			\end{compactenum}
		\item[\textit{Divergence Reflection:}] For every $ S $ with $ \enco{S} \step^{\omega} $, it holds that $ S \step^{\omega} $.
		\item[\textit{Success Sensitiveness:}] For every $ S $, it holds $ S \reachS $ iff $ \enco{S} \reachS $.
	\end{compactenum}
\end{definition}

Intuitively, an encoding is compositional if the translation of an operator is similar for all its parameters. To mediate between the translations of the parameters the encoding defines a unique context for each operator, whose arity is the arity of the operator. Moreover, the context can be parametrised on the free names of the corresponding source term. Note that our result is independent of this parametrisation.
In name invariance the $ \sigma' $ can be considered as the translation of $ \sigma $. The condition $ \vap(\sigma(a)) = \sigma'(\vap(a)) $ ensures that $ \vap $ introduces no additional renamings between (parts of) translations of source term names. Of course $ \sigma' $ cannot affect reserved names, \ie for all names $ x $ in the domain of $ \sigma' $ there is a source term name $ a $ such that $ x \in \vap\!\left( a \right) $.
Operational correspondence consists of a soundness and a completeness condition. \emph{Completeness} requires that every execution of a source term can be simulated by its translation, \ie the translation does not omit any execution of the source term. \emph{Soundness} requires that every execution of a target term corresponds to some execution of the corresponding source term, \ie the translation does not introduce new executions.
Note that the definition of operational correspondence relies on the equivalence $ \asymp $ to get rid of junk possibly left over within executions of target terms. An encoding reflects divergence if it does not introduce divergent executions.
The last criterion states that an encoding preserves the behaviour of the source term if it and its corresponding target term answer the tests for success in exactly the same way.

Success sensitiveness only links the behaviours of source terms and their literal translations but not of their continuations. To do so, Gorla relates success sensitiveness and operational correspondence by requiring that $ \asymp $ never relates two processes that differ in the possibility to reach success. More precisely $ \asymp $ \emph{respects success} if, for every $ P $ and $ Q $ with $ P \reachS $ and $ Q \nReachS $, it holds that $ P \not\asymp Q $.
By \cite{gorla} a ``good'' equivalence $ \asymp $ is often defined in the form of a barbed equivalence (as described e.g. in \cite{milnerSangiorgi92}) or can be derived directly from the reduction semantics and is often a congruence, at least with respect to parallel composition. For the separation results presented in this paper, we require only that $ \asymp $ is a success respecting reduction bisimulation, \ie for every $ T_1, T_2 \in \proc_2 $ such that $ T_1 \asymp T_2 $, $ T_1 \reachS $ iff $ T_2 \reachS $ and for all $ T_1 \steps_2 T_1' $ there exists a $ T_2' $ such that $ T_2 \steps_2 T_2' $ and $ T_1' \asymp T_2' $.


\section{The Match Prefix and the Pi-Calculus}
\label{sec:encodeMatch}

Our separation result strongly rests on the criteria compositionality and success sensitiveness.
We also make use of name invariance.
But, as we claim, name invariance is not crucial for the proof. Name invariance defines how a valid encoding has to deal with substitutions. This is used to simplify the argumentation in our proof as explained below.
The last criterion states that source and target terms are related by their ability to reach success.
If we compare \piT and \piNM we observe a difference with respect to successful terms and substitutions.
In \piT a substitution can change the state of a process from unsuccessful to successful. Consider for example the term $ \match{a}{b}\success $ and a substitution $ \sigma $ such that $ \sigma(a) = \sigma(b) $. The only occurrence of success in $ \match{a}{b}\success $ is guarded by a match prefix and thus $ \left( \match{a}{b}\success \right) \nHasS $. But $ \sigma\!\left( \match{a}{b}\success \right) = \match{\sigma(a)}{\sigma(b)}\success $ and thus $ \sigma\!\left( \match{a}{b}\success \right) \hasS $.
In \piNM, because there is no match prefix, a substitution cannot turn an unsuccessful state into a successful state.

\begin{lemma}
	\label{prop:propequiv}
	Let $ T \in \procPiNoMatch $. Then $ T \hasS \iff \forall \sigma : \mc \to \mc \pkt \sigma\!\left( T \right) \hasS $.
\end{lemma}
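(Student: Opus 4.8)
The plan is to establish the equivalent but slightly stronger claim that
$\success \in \ungSub{T} \iff \success \in \ungSub{\sigma(T)}$
holds for \emph{every} $\sigma : \mc \to \mc$, and then read off the lemma. The right-to-left direction of the lemma is then immediate by instantiating $\sigma$ with the identity (empty) substitution, for which $\sigma(T) = T$; the left-to-right direction is precisely the forward implication of the strengthened claim. The single observation that drives everything is that $\success$ is a name-free constant: applying a substitution, together with any capture-avoiding alpha-renaming it triggers, can neither create nor erase an occurrence of $\success$, and it preserves which subterm positions sit underneath a prefix guard. (This is exactly what fails in \piT, where $\ungSub{\match{a}{a}Q}$ recurses into $Q$ while $\ungSub{\match{a}{b}Q}$ with $a\neq b$ does not, so a unifying substitution can unlock a guarded $\success$; in \piNM there is no such rule.)

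I would prove the strengthened claim by (strong) induction on the number of operators in the \piNM-term $T$. For the base cases, if $T = \nullTerm$ or $T$ is a prefixed term $\piInput{x}{z}{P}$, $\piOutput{x}{y}{P}$ or $\tau.P$, then $\ungSub{T} = \Set{T}$ and $T \neq \success$; moreover $\sigma(T)$ is again a term of the same shape (after possibly renaming the input-bound $z$) with $\ungSub{\sigma(T)} = \Set{\sigma(T)}$ and $\sigma(T) \neq \success$, so both sides are false. If $T = \success$, then $\sigma(\success) = \success$ and both sides are true. For the transparent operators $T = Q_1 + Q_2$, $T = Q_1 \mid Q_2$ and $T = {!}Q$, substitution distributes over the top-level operator without any binding, so $\sigma(T)$ has the same top operator with translated immediate subterms and $\ungSub{\cdot}$ recurses identically; since $T \neq \success$, membership of $\success$ reduces to membership in the $\ungSub{\cdot}$ of the immediate subterms, and the claim follows from the induction hypothesis applied to each of them.

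The only delicate case is restriction, $T = \Res{z}{Q}$, and this is where I expect the main obstacle to sit. Computing $\sigma(\Res{z}{Q})$ may force an alpha-conversion of $z$ to a fresh $z'$ before $\sigma$ is pushed inside, giving $\sigma(T) = \Res{z'}{\sigma(\Set{\subst{z'}{z}}Q)}$. Since $\ungSub{\cdot}$ is name-sensitive (as the examples after Definition~\ref{def:unguardedSubterms} show), I cannot identify $\ungSub{Q}$ with $\ungSub{\Set{\subst{z'}{z}}Q}$ as sets; what I can use is that they agree on whether they contain $\success$, which is the claim for the injective renaming $\Set{\subst{z'}{z}}$ applied to $Q$. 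Crucially, $\Set{\subst{z'}{z}}Q$ has the same size as $Q$, hence strictly smaller than $T$, so the induction hypothesis is available both for $Q$ with the renaming $\Set{\subst{z'}{z}}$ and for $\Set{\subst{z'}{z}}Q$ with $\sigma$. Chaining these two instances, and using $\ungSub{\Res{z'}{R}} = \Set{\Res{z'}{R}} \cup \ungSub{R}$ together with $\Res{z'}{R} \neq \success$, yields $\success \in \ungSub{T} \iff \success \in \ungSub{\sigma(T)}$ and closes the induction. I would isolate the alpha-renaming bookkeeping into a small auxiliary remark that injective renamings preserve $\success$-membership in $\ungSub{\cdot}$, so that the only genuine subtlety is separated from the otherwise mechanical case analysis.
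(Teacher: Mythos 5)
Your proof is correct and follows what is essentially the paper's route: the paper defers the proof to the technical report, but the intended argument is exactly this induction on the term showing that, in the absence of the match prefix, $\success$-membership in $\ungSub{\cdot}$ is invariant under arbitrary substitutions (the backward direction being the trivial identity instance). Your explicit treatment of the restriction case via size induction to absorb alpha-conversion, and your observation that prefix cases need no induction hypothesis because $\ungSub{\cdot}$ does not recurse under guards, are exactly the points that need care; the strengthened per-$\sigma$ biconditional you prove is moreover the form actually used in the surrounding text.
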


In both calculi substitutions may allow us to reach success by enabling a communication step. To do so it has to unify two free names that are the links of an unguarded input and an unguarded output. In the case of \piNM the enabling of such a new communication step is indeed the only possibility for a substitution to influence the reachability of success.
More precisely, if in \piNM a substitution $ \sigma $ allows to reach success, \ie if $ \sigma\!\left( T \right) \reachS $ but $ T \nReachS $, then there is a derivative of $ T $ in which $ \sigma $ unifies the free link names of an input and an output guard and thus enables a new communication step.

\begin{lemma}
	\label{prop:newCom}
	Let $ T \in \procPiNoMatch $ and $ \sigma : \mc \to \mc $ such that $ \sigma\!\left( T \right) \reachS $ but $ T \nReachS $.
	Then:
	\vspace{-0.5em}
	\begin{align*}
		& \exists T', T_1, T_2, T_3, T_4 \in \procPiNoMatch \pkt \exists y \in \mc \pkt \exists a, b \in \freeNames{T} \pkt\\
		& \hspace{1em} T \steps\equiv T' \wedgeL a, b \notin \boundNames{T'} \wedgeL \left( T_1 \mid T_2 \right) \in \ungSub{T'} \wedgeL \piInput{a}{y}{T_3} \in \ungSub{T_1}\\
		& \hspace{1em} \wedge \; \piOutput{b}{y}{T_4} \in \ungSub{T_2} \wedgeL \sigma(a) = \sigma(b) \wedgeL a \neq b
	\end{align*}
\end{lemma}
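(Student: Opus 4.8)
The plan is to induct on the length $n$ of a shortest execution $\sigma(T) \steps Q$ with $Q \hasS$ (such an execution exists because $\sigma(T) \reachS$). Before starting, I would alpha-convert $T$ so that all its bound names are fresh: they occur neither in $\freeNames{T}$ nor in the domain or range of $\sigma$. This ensures that $\sigma$ acts only on free names and that the syntax tree of $\sigma(T)$ is exactly that of $T$ with free names relabelled, so that the unguarded-subterm structure of $\sigma(T)$ matches that of $T$ position by position; this is precisely what tames the non-distributivity of substitution over $\ungSub{\cdot}$ noted after Definition~\ref{def:unguardedSubterms}. For the base case $n = 0$ we would have $\sigma(T) \hasS$, hence $T \hasS$ by Lemma~\ref{prop:propequiv}, contradicting $T \nReachS$; so in fact $n \geq 1$ and only the inductive step ever occurs.

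For $n \geq 1$ I would inspect the first step $\sigma(T) \step Q_1$ of the chosen execution. Every reduction is justified by an unguarded redex: either an unguarded $\tau$-prefix, or an unguarded input guard $\piInput{x}{z}{\cdot}$ together with an unguarded output guard $\piOutput{x}{w}{\cdot}$ sitting in parallel, i.e.\ some $(U_1 \mid U_2) \in \ungSub{\sigma(T)}$ with the input unguarded in one component and the output in the other. By freshness of bound names, this redex sits at a position that also exists in $T$: the $\tau$-prefix is untouched by $\sigma$, and an input on $x$ (resp.\ output on $x$) in $\sigma(T)$ stems from an input on some channel $a$ (resp.\ output on some channel $b$) in $T$ with $\sigma(a) = x = \sigma(b)$. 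The decisive case split is whether these two channels already coincide in $T$.

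If the first step is a $\tau$-step, or a communication whose two guards already share a channel in $T$ --- that is $a = b$, or a channel bound by a common restriction, which $\sigma$ leaves unchanged --- then the identical step is available in $T$, yielding $T \step T''$ with $\sigma(T'') \equiv Q_1$ by the standard substitution lemma (freshness of the input-bound name lets $\sigma$ commute with the communication substitution up to $\equiv$). Since $T \nReachS$ and $T \step T''$, also $T'' \nReachS$; and $\sigma(T'') \equiv Q_1 \steps Q \hasS$ shows $\sigma(T'') \reachS$ via an execution shorter than $n$. The induction hypothesis applied to $T''$ then delivers a $T'''$ with $T'' \steps\equiv T'''$ and all the required guard structure; prepending $T \step T''$ gives $T \steps\equiv T'''$, and because reductions never introduce free names we have $\freeNames{T''} \subseteq \freeNames{T}$, so the witnesses $a, b$ still lie in $\freeNames{T}$. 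Taking $T' = T'''$ completes this case.

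In the remaining case the first step is a communication between an unguarded input on a free channel $a$ and an unguarded output on a free channel $b$ with $a \neq b$ but $\sigma(a) = \sigma(b)$ --- exactly the communication that $\sigma$ newly enables. Here the conclusion holds at once with $T' = T$: the parallel redex yields $(T_1 \mid T_2) \in \ungSub{T}$ with $\piInput{a}{y}{T_3} \in \ungSub{T_1}$ and $\piOutput{b}{y}{T_4} \in \ungSub{T_2}$ (alpha-converting the input-bound variable to the object $y$ of the output), we have $a, b \in \freeNames{T}$ and, by the initial freshness assumption, $a, b \notin \boundNames{T}$, while $T \steps\equiv T'$ holds with zero reduction steps. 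I expect the main obstacle to be the middle step: establishing rigorously the position-by-position correspondence between the unguarded redexes of $\sigma(T)$ and of $T$, and proving that a redex of $\sigma(T)$ fails to lift to $T$ exactly when $\sigma$ merges two distinct free channel names. This is the heart of the argument, and it is what the initial alpha-conversion and the substitution lemma are there to support.
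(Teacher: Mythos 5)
Your proposal is correct and follows what is essentially the intended argument: induct on the length of a shortest success-reaching execution of $\sigma(T)$ (after freshening bound names so that $\sigma$ commutes with the term structure), and split on whether the first redex lifts to $T$ (recurse on the shorter tail) or is a communication newly enabled by $\sigma$ identifying two distinct free names (stop). The one imprecision is the claim that every communication redex of $\sigma(T)$ already appears as some $\left( U_1 \mid U_2 \right) \in \ungSub{\sigma(T)}$ --- under replication, e.g.\ for $!\left( \piInput{x}{z}{P} + \piOutput{x}{y}{Q} \right)$, the redex is exposed only after rewriting by $\equiv$ --- which is exactly why the conclusion reads $T \steps\equiv T'$ rather than $T' = T$; your terminal case should therefore conclude with some $T' \equiv T$ rather than $T' = T$, and the rest goes through unchanged.
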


In the following proofs we often use the term $ \match{a}{b}\success $ or a variant of this term as counterexample. To reason about the encoding of this term we analyse the context $ \contextMatch{a}{b}{N \cup \Set[]{ a, b }}{\hole} $ that is introduced according to compositionality to translate $ \match{a}{b} $. Note that this context is parameterised on $ N \cup \Set{ a, b } $, which is the set of free names of the encoded term. For example in the case of $ \match{a}{b}\success $ the set of free names contains only $ a $ and $ b $, \ie $ N = \emptyset $. Moreover $ a $, $ b $ and the continuation of the match prefix are parameters of this context. First we show that this context cannot reach success on its own, \ie without a term in its hole.

\begin{lemma}
	\label{lem:contextCannotReachSuccess}
	Let \encod be a valid encoding from \piT into \piNM.
	Let $ N \subseteq \mc $ be a finite set of names,
	$ a, b \in \mc $ be names such that $ a \neq b $,
	and $ \contextMatch{a}{b}{N \cup \Set[]{ a, b }}{\hole} $ be the context that is introduced by $ \enco{\cdot} $ to encode the match prefix $ \match{a}{b} $. Then $ \contextMatch{a}{b}{N \cup \Set[]{ a, b }}{\hole} $ cannot reach success on its own, \ie $ \contextMatch{a}{b}{N \cup \Set[]{ a, b }}{\hole} \nReachS $.
\end{lemma}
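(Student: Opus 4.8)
The plan is to derive this from \emph{compositionality} and \emph{success sensitiveness}, together with a monotonicity property of success reachability under hole filling; name invariance is not needed here.

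First I would pick a source term $S \in \procPi$ with $\freeNames{S} = N$; such a term exists for every finite $N$ (take, e.g., a parallel composition of trivial inputs $\piInput{c}{z}{\nullTerm}$ for $c \in N$, and $\nullTerm$ when $N = \emptyset$). Consider the term $\match{a}{b}S$, a variant of the running counterexample $\match{a}{b}\success$. Because $a \neq b$, no transition rule applies to a top-level match prefix and the congruence rule $\match{a}{a}P \equiv P$ cannot fire, so $\match{a}{b}S \noStep$; moreover the definition of unguarded subterms gives $\ungSub{\match{a}{b}S} = \Set{\match{a}{b}S}$ (the ``otherwise'' case, since $a \neq b$), whence $\match{a}{b}S \nHasS$ regardless of $S$. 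Therefore $\match{a}{b}S \nReachS$. By success sensitiveness this yields $\enco{\match{a}{b}S} \nReachS$, and by compositionality $\enco{\match{a}{b}S} = \contextMatch{a}{b}{N \cup \Set{ a, b }}{\enco{S}}$, so $\contextMatch{a}{b}{N \cup \Set{ a, b }}{\enco{S}} \nReachS$.

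Next I would establish the key step: for every \piNM-context $\context{C}{\hole}$ and every target term $T$, if $\context{C}{\hole} \reachS$ then $\context{C}{T} \reachS$. Intuitively the hole is inert — it performs no step and $\ungSub{\hole} = \Set{\hole}$ — so every reduction in a success-reaching execution $\context{C}{\hole} \steps Q$ with $Q \hasS$ happens away from the hole and can be replayed step by step in $\context{C}{T}$, carrying $T$ along (applying alpha-conversion where $\context{C}{\cdot}$ binds free names of $T$). I would make this precise by induction on the length of the execution, noting that a single step may discard a copy of the hole (via a choice) or duplicate it (via replication), but in each case the corresponding step is available with $T$ in place of $\hole$. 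Finally, since the witnessing unguarded occurrence of $\success$ lies in the context part and not in a hole (as $\hole \neq \success$), replacing the remaining holes by $T$ leaves this $\success$ unguarded; hence $\context{C}{T} \hasS$ and so $\context{C}{T} \reachS$.

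Instantiating this with $\context{C}{\hole} = \contextMatch{a}{b}{N \cup \Set{ a, b }}{\hole}$ and $T = \enco{S}$ gives: $\contextMatch{a}{b}{N \cup \Set{ a, b }}{\hole} \reachS$ would imply $\contextMatch{a}{b}{N \cup \Set{ a, b }}{\enco{S}} \reachS$, contradicting the second paragraph. Therefore $\contextMatch{a}{b}{N \cup \Set{ a, b }}{\hole} \nReachS$. The main obstacle is the monotonicity step: one must argue carefully that no success-reaching execution of the empty-hole context is an artefact that vanishes once the hole is filled, i.e. that choice (which may delete the hole) and replication (which may copy it) behave uniformly whether the hole contains $\hole$ or $\enco{S}$, and that the non-$\equiv$-invariance of $\ungSub{\cdot}$ does not interfere with the preservation of the unguarded $\success$.
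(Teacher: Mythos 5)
Your proposal is correct and takes essentially the same route as the paper's own proof: instantiate the hole with $\enco{S}$ for some $S$ with $\freeNames{S}=N$, observe that $\match{a}{b}S \nReachS$ because $a \neq b$ guards everything, combine success sensitiveness with compositionality, and conclude via the monotonicity observation that any success-reaching execution of the bare context can be replayed once the hole is filled, with the witnessing unguarded $\success$ necessarily lying in the context part. The care you take over choice discarding hole occurrences, replication duplicating them, and capture/alpha-conversion is exactly the right level of detail for that last step.
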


Moreover the context introduced to encode the match prefix has to ensure that its hole, \ie the respective encoding of the continuation of the match prefix, is initially guarded and cannot be unguarded by the context on its own.

\begin{lemma}
	\label{lem:contextNotUnguardContinuation}
	Let \encod be a valid encoding from \piT into \piNM.
	Let $ N \subseteq \mc $ be an arbitrary finite set of names,
	$ a, b \in \mc $ be arbitrary names such that $ a \neq b $,
	and $ \contextMatch{a}{b}{N \cup \Set[]{ a, b }}{\hole} $ be the context that is introduced by $ \enco{\cdot} $ to encode the match prefix $ \match{a}{b} $. Then $ \contextMatch{a}{b}{N \cup \Set[]{ a, b }}{\hole} $ cannot unguard its hole, \ie $ \contextMatch{a}{b}{N \cup \Set[]{ a, b }}{\hole} \steps \context{C'}{\hole} $ implies $ \hole \notin \ungSub{\context{C'}{\hole}} $ for all $ \context{C'}{\hole} : \procPiNoMatch \to \procPiNoMatch $.
\end{lemma}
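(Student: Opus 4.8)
The plan is to argue by contradiction, combining \emph{compositionality} and \emph{success sensitiveness} with a \emph{successful} continuation placed in the hole; the argument parallels that for Lemma~\ref{lem:contextCannotReachSuccess}, but now exploits the fact that an unguarded subterm which reaches success forces the whole term to reach success. So assume, towards a contradiction, that there is a context $ \context{C'}{\hole} $ with $ \contextMatch{a}{b}{N \cup \Set[]{ a, b }}{\hole} \steps \context{C'}{\hole} $ and $ \hole \in \ungSub{\context{C'}{\hole}} $. (Note that name invariance is not needed for this: I fix the free names of the counterexample directly.)

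First I fix a convenient continuation. Since $ N $ is finite, I can pick a source term $ P_N \in \procPi $ with $ \freeNames{P_N} = N $ and $ P_N \hasS $; for instance $ \success $ put in parallel with one guarded input $ \piInput{c}{y}{\nullTerm} $ for each $ c \in N $. Let $ S = \match{a}{b}P_N $. Because $ a \neq b $ we are in the ``otherwise'' case of Definition~\ref{def:unguardedSubterms}, so $ \ungSub{S} = \Set{ S } $ and hence $ S \nHasS $; moreover no reduction rule applies to $ S $, so $ S \noStep $, whence $ S \nReachS $. Success sensitiveness then yields $ \enco{S} \nReachS $. On the other hand $ \freeNames{S} = N \cup \Set{ a, b } $, so compositionality gives $ \enco{S} = \contextMatch{a}{b}{N \cup \Set[]{ a, b }}{\enco{P_N}} $.

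Next I transport the assumed reduction into the filled context. A reduction sequence of a context treats the hole as an inert placeholder, so none of its steps involves the hole; therefore $ \contextMatch{a}{b}{N \cup \Set[]{ a, b }}{\hole} \steps \context{C'}{\hole} $ lifts to $ \enco{S} = \contextMatch{a}{b}{N \cup \Set[]{ a, b }}{\enco{P_N}} \steps \context{C'}{\enco{P_N}} $. Using the extension $ \ungSub{\hole} = \Set{ \hole } $, the assumption $ \hole \in \ungSub{\context{C'}{\hole}} $ gives $ \enco{P_N} \in \ungSub{\context{C'}{\enco{P_N}}} $. Since $ P_N \hasS $ we have $ P_N \reachS $, and success sensitiveness gives $ \enco{P_N} \reachS $. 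Granting the auxiliary fact that an unguarded subterm reaching success makes the enclosing term reach success, we obtain $ \context{C'}{\enco{P_N}} \reachS $; together with $ \enco{S} \steps \context{C'}{\enco{P_N}} $ this gives $ \enco{S} \reachS $, contradicting $ \enco{S} \nReachS $.

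The two transport steps are where the work sits, and the second is the only real obstacle. Lifting context reductions to the filled context is standard, since the redex always lies in the context part and not in the (guarded) hole. The delicate auxiliary fact is: if $ T' \in \ungSub{T} $ and $ T' \reachS $, then $ T \reachS $. This follows by induction on the position of $ T' $ inside $ T $, using the fact noted after Definition~\ref{def:unguardedSubterms} that unguarded subterms of an unguarded subterm are again unguarded subterms (which already yields $ T' \hasS \Rightarrow T \hasS $). The reductions of $ T' $ are pushed through parallel composition and restriction by the corresponding reduction rules, through $ \match{a}{a}(\cdot) $ and replication via structural congruence, and through choice by the choice rule; this last case discards the competing summands but retains the branch containing $ T' $, which is exactly what reachability of success tolerates.
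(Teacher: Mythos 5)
Your proof is correct and follows essentially the same route as the paper's: it instantiates the hole with the encoding of a successful continuation, i.e.\ uses a counterexample of the form $\match{a}{b}X$ where $X \hasS$ and $\freeNames{X} = N$ (exactly the shape of counterexample the paper announces for these lemmas), and derives a contradiction from compositionality and success sensitiveness. The two supporting facts you isolate---lifting context reductions to the filled context, and that an unguarded subterm with $T' \reachS$ forces $T \reachS$---are the right ones and your sketches of them go through.
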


Next we combine our knowledge of the context $ \contextMatch{a}{b}{N \cup \Set[]{ a, b }}{\hole} $ and the relationship between substitutions and the reachability of success in \piNM as stated in Lemma~\ref{prop:newCom}. We show that a match prefix $ \match{a}{b} $ has to be translated into two communication partners, \ie an input and an output, on the translations of $ a $ and $ b $. Intuitively such a communication is the only way to simulate the test for equality of names that is performed by $ \match{a}{b} $. Moreover we derive that the respective links of the communication partner have to be free in the context $ \contextMatch{a}{b}{N \cup \Set[]{ a, b }}{\hole} $. Intuitively they have to be free, because otherwise no substitution can unify them. Consider for example the term $ \piOutput{x}{a}{\nullTerm} \mid \piInput{x}{b}{\match{a}{b}\success} $. In order to reach success the term first communicates the name $ a $ on $ x $. This communication leads to a substitution of the name $ b $ by the received value $ a $ in the continuation of the input guarded subterm. Only this substitution allows to unguard the only occurrence of success. Hence, to simulate such a behaviour of source terms, the encoding has to translate match prefixes into communication {partners}|{to} simulate the test for {equality}|{and} the links of these communication partners have to be {free}|{to} allow for substitutions induced by communication steps. Note that name invariance allows us to ignore a surrounding {communication}|{as} the step on link $ x $ in the example $ \piOutput{x}{a}{\nullTerm} \mid \piInput{x}{b}{\match{a}{b}\success} $|{and} to concentrate directly on the induced substitution.

To avoid the use of the criterion name invariance it suffices to show that the encodings of $ \piOutput{x}{a}{\nullTerm} \mid \piInput{x}{b}{\match{a}{b}\success} $ and $ \piOutput{x}{b}{\nullTerm} \mid \piInput{x}{b}{\match{a}{b}\success} $ differ only by a substitution of (parts of) the translations of $ a $ and $ b $\footnote{Unfortunately, because of the formulation of compositionality that allows for the contexts to depend on the free names of the term, this task is technically elaborate.} and that the contexts introduced to encode outputs and inputs cannot lead to success themselves, \ie that the encoding of $ \piOutput{x}{a}{\nullTerm} \mid \piInput{x}{b}{\match{a}{b}\success} $ reaches success iff the encoding of $ \match{a}{b}\success $ is unguarded. This suffices to reconstruct the substitution and the conditions on this substitution that are used in Lemma~\ref{lem:matchReqCom} and to prove Lemma~\ref{prop:newCom} and Lemma~\ref{lem:parContextNotRes} \wrt such substitutions. The remaining proofs remain the same.

Note that the appendix provides a more formal formulation of the next lemma.

\begin{lemma}
	\label{lem:matchReqCom}
	Let \encod be a valid encoding from \piT into \piNM.
	Let $ N \subseteq \mc $ be an arbitrary finite set of names,
	$ a, b \in \mc $ be arbitrary names such that $ a \neq b $,
	and $ \contextMatch{a}{b}{N \cup \Set[]{ a, b }}{\hole} $ be the context that is introduced by $ \enco{\cdot} $ to encode the match prefix $ \match{a}{b} $.
	Then there is some $ i \in \Set{ 1, \ldots, \length{\vap\!\left( a \right)} } $ such that $ \contextMatch{a}{b}{N \cup \Set[]{ a, b }}{\hole} $ reaches a state with an unguarded and free output and input on the links $ \proj{\vap\!\left( a \right)}{i} $ and $ \proj{\vap\!\left( b \right)}{i} $. Moreover $ \contextMatch{a}{b}{N \cup \Set[]{ a, b }}{\hole} $ cannot unguard its hole until a substitution unifies these two links.
\end{lemma}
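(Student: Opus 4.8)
The plan is to probe the context with the source term $\match{a}{b}S'$ for a well‑chosen continuation $S'$, and then to transport the information that Lemma~\ref{prop:newCom} extracts about $\enco{\match{a}{b}S'}$ back onto the context itself using Lemma~\ref{lem:contextNotUnguardContinuation}. First I would fix a source term $S' = \success \mid R$, where $R$ is an inert term (a parallel composition of guarded inputs $\piInput{c}{w}{\nullTerm}$) chosen so that $\freeNames{S'} = N \cup \Set[]{ a, b }$ and $\success \in \ungSub{S'}$. Since $a \neq b$ the match guards the only occurrence of success, so $\match{a}{b}S' \nReachS$; but for the non‑injective $\sigma = \Set[]{ \subst{a}{b} }$ we have $\sigma(a) = \sigma(b)$ and $\sigma\!\left( \match{a}{b}S' \right) \equiv \sigma(S') \hasS$, hence $\sigma\!\left( \match{a}{b}S' \right) \reachS$. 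By compositionality $\enco{\match{a}{b}S'} = \contextMatch{a}{b}{N \cup \Set[]{ a, b }}{\enco{S'}}$; I write $\context{C}{\hole}$ for this context and $T = \context{C}{\enco{S'}}$.

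Next I would force a unifying communication. Success sensitiveness turns $\match{a}{b}S' \nReachS$ into $T \nReachS$ and $\sigma\!\left( \match{a}{b}S' \right) \reachS$ into $\enco{\sigma(\match{a}{b}S')} \reachS$. As $\sigma$ is not injective, name invariance gives $\enco{\sigma(\match{a}{b}S')} \asymp \sigma'\!\left( T \right)$ with $\vap(\sigma(c)) = \sigma'(\vap(c))$ for all $c$, and since $\asymp$ respects success this yields $\sigma'\!\left( T \right) \reachS$. Now $T \in \procPiNoMatch$ with $T \nReachS$ and $\sigma'\!\left( T \right) \reachS$, so Lemma~\ref{prop:newCom} applies: there are $a', b' \in \freeNames{T}$ and a derivative $T \steps\equiv T^\ast$ with $a', b' \notin \boundNames{T^\ast}$ exhibiting an unguarded input on $a'$ and an unguarded output on $b'$, where $a' \neq b'$ and $\sigma'(a') = \sigma'(b')$.

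I would then identify the links. The chosen $\sigma$ forces $\sigma'$ to collapse exactly the pairs $\proj{\vap(b)}{i}$, $\proj{\vap(a)}{i}$: indeed $\vap(\sigma(b)) = \vap(a) = \sigma'(\vap(b))$ and $\vap(a) = \sigma'(\vap(a))$, while $\sigma'$ fixes every name outside $\vap(a)$ and $\vap(b)$ (including all reserved names). Because these two images are disjoint and each consists of distinct names, $a' \neq b'$ with $\sigma'(a') = \sigma'(b')$ is possible only for $\Set[]{ a', b' } = \Set[]{ \proj{\vap(a)}{i}, \proj{\vap(b)}{i} }$ with a single $i \in \Set[]{ 1, \ldots, \length{\vap(a)} }$, which fixes the desired index. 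It remains to move the unguarded input/output from $T^\ast$ onto $\context{C}{\hole}$. By Lemma~\ref{lem:contextNotUnguardContinuation} no reduction of $\context{C}{\hole}$ unguards the hole; since a guarded subterm is inert, a short induction on the execution length shows that every term reachable from $T = \context{C}{\enco{S'}}$ is structurally congruent to $\context{C'}{\enco{S'}}$ coming from a context reduction $\context{C}{\hole} \steps \context{C'}{\hole}$ with $\hole \notin \ungSub{\context{C'}{\hole}}$, so $\enco{S'}$ never participates. Hence $T^\ast \equiv \context{C'}{\enco{S'}}$ and the unguarded free input/output on $\proj{\vap(a)}{i}$ and $\proj{\vap(b)}{i}$ already occur in $\context{C'}{\hole}$, which proves the first claim.

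For the second claim I would again combine Lemma~\ref{lem:contextNotUnguardContinuation}, which rules out any unguarding of the hole without a substitution, with success sensitiveness. To show the enabling substitution must unify $\proj{\vap(a)}{i}$ and $\proj{\vap(b)}{i}$, I would argue by contradiction: if some substitution leaving them distinct let the context unguard its hole, then with the successful continuation above the continuation would be reached, and since by Lemma~\ref{lem:contextCannotReachSuccess} the context contributes no success of its own, the filled term would reach success; pulling this substitution back through name invariance to a source substitution $\rho_0$ with $\rho_0(a) \neq \rho_0(b)$ would give $\rho_0\!\left( \match{a}{b}S' \right) \nReachS$ yet $\enco{\rho_0(\match{a}{b}S')} \reachS$, contradicting success sensitiveness. \emph{The hard part} is precisely this last step: making the pull‑back of a target substitution to a source substitution rigorous (name invariance only speaks about encodings of source substitutions, and compositionality lets the match context depend on $N \cup \Set[]{ a, b }$, so one must ensure that varying the substitution does not silently change the context). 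This bookkeeping, together with transferring statements about $\ungSub{\cdot}$ across $\equiv$ (which need not preserve unguarded subterms), is where the real effort lies; the remaining steps are routine once the counterexample and Lemma~\ref{prop:newCom} are in place.
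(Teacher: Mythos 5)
Your proposal follows essentially the same route the paper takes: probe the match context with a continuation whose free names realise $N \cup \Set{a,b}$, use the counterexample $\match{a}{b}\success$ together with a source substitution unifying $a$ and $b$, transport it through name invariance to the induced target substitution $\sigma'$, extract the required free communication partners via Lemma~\ref{prop:newCom}, and localise them in the context (rather than in the encoded continuation) via Lemma~\ref{lem:contextNotUnguardContinuation}, with Lemma~\ref{lem:contextCannotReachSuccess} and success sensitiveness handling the second claim. The only point worth recording is that your identification of the pair of link names with $\Set{\proj{\vap(a)}{i}, \proj{\vap(b)}{i}}$ implicitly assumes that the components of each vector $\vap(\cdot)$ are pairwise distinct, a standard and here harmless assumption on renaming policies, and the ``pull-back'' difficulty you flag for the second claim is genuine but is exactly the bookkeeping the paper defers to its technical report.
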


\noindent
A very important consequence of the lemma above is the existence of the index $ i $ for all contexts $ \contextMatch{a}{b}{N \cup \Set[]{ a, b }}{\hole} $ regardless of $ N $ and of the terms that may be inserted in the hole. Note that the ``there is some'' does not necessarily imply that there is just one such $ i $. If the renaming policy splits up a source term name into several target term names then different parts of this vector can be used to simulate the test for equality. However, the above lemma states that there is at least one such $ i $, \ie at least one part of the translation of source term names is used to implement the required communication partners.

To derive the separation result we need a counterexample that combines two match prefixes in parallel.
Therefore we need some information on the context $ \contextPar{N}{\hole_1}{\hole_2} $ that is introduced by $ \enco{\cdot} $ according to compositionality to translate the parallel operator. Note that this context is parameterised on the set $ N $ that consists of the free names of the two parallel components that should be encoded. Moreover the two holes serve as placeholders for the encoding of the left and the right hand side of the source term.
Similar to the context introduced to encode the match prefix, the context that is introduced to encode the parallel operator cannot reach success on its own, \ie $ \contextPar{N}{\hole_1}{\hole_2} \nReachS $.

\begin{lemma}
	\label{lem:parContextCannotReachSuccess}
	Let \encod be a valid encoding from \piT into \piNM.
	Let $ N \subseteq \mc $ be a finite set of names
	and $ \contextPar{N}{\hole_1}{\hole_2} $ be the context introduced by $ \enco{\cdot} $ to encode the parallel operator.
	Then, $ \contextPar{N}{\hole_1}{\hole_2} \nReachS $.
\end{lemma}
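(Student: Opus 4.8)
The plan is to mirror the proof of Lemma~\ref{lem:contextCannotReachSuccess}, replacing the counterexample $\match{a}{b}\success$ by a parallel composition of \emph{success-free} source terms, and then to exploit only compositionality together with success sensitiveness. Concretely, I would fix a success-free source term realising the free-name set $N$, take its parallel self-composition, observe that this source term cannot reach success, transport this to its encoding via success sensitiveness, rewrite the encoding as the parallel context with the two component encodings plugged into the holes, and finally argue that if the bare context could reach success then so could the filled context, contradicting success sensitiveness.

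First I would choose source terms $P_1, P_2 \in \procPi$ with $\freeNames{P_1} = \freeNames{P_2} = N$ that contain \emph{no} occurrence of $\success$ at all; such terms exist for every finite $N$ (take $P_1 = P_2 = \nullTerm$ if $N = \emptyset$, and otherwise the parallel composition $\prod_{n \in N} \piOutput{n}{n}{\nullTerm}$, which mentions each name of $N$ freely). Since neither a reduction step nor a rule of $\equiv$ ever fabricates an occurrence of $\success$, no derivative of $P_1 \mid P_2$ has an unguarded $\success$, so $P_1 \mid P_2 \nReachS$. By success sensitiveness this yields $\enco{P_1 \mid P_2} \nReachS$, and by compositionality (using $\freeNames{P_1, P_2} = N$) we have $\enco{P_1 \mid P_2} = \contextPar{N}{\enco{P_1}}{\enco{P_2}}$. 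Hence the filled context does not reach success.

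The heart of the argument, and the step I expect to be the main obstacle, is the monotonicity of reachability of success under hole-filling. Assume for contradiction that $\contextPar{N}{\hole_1}{\hole_2} \reachS$, i.e.\ $\contextPar{N}{\hole_1}{\hole_2} \steps \context{C'}{\hole_1, \hole_2}$ for some $\piNM$-context $\context{C'}{\hole_1, \hole_2}$ with $\success \in \ungSub{\context{C'}{\hole_1, \hole_2}}$, where the holes are treated as inert placeholders (they neither reduce nor equal $\success$, and $\ungSub{\hole_i} = \Set{ \hole_i }$ as fixed in the extended definition). Because no redex used along this sequence and no unguarded-success witness of the bare context involves the holes, exactly the same reduction sequence is available after filling, $\contextPar{N}{\enco{P_1}}{\enco{P_2}} \steps \context{C'}{\enco{P_1}, \enco{P_2}}$, and the occurrence of $\success$ witnessing $\success \in \ungSub{\context{C'}{\hole_1, \hole_2}}$ lies in the context part (as $\success \neq \hole_i$); since the operators on the path from the root to that occurrence are unchanged by filling, it is still unguarded, i.e.\ $\success \in \ungSub{\context{C'}{\enco{P_1}, \enco{P_2}}}$. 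Thus $\contextPar{N}{\enco{P_1}}{\enco{P_2}} \reachS$, contradicting the previous paragraph. I would isolate the two preservation facts used here (reductions of a context lift to reductions of any filling, and non-hole unguarded subterms of a context remain unguarded after filling), as these are precisely what makes the inert-hole reading of $\nReachS$ on contexts meaningful; once they are in place the contradiction closes the proof and gives $\contextPar{N}{\hole_1}{\hole_2} \nReachS$.
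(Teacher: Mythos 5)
Your proof is correct and follows essentially the same route as the paper's: pick success-free source terms with free names $N$, use success sensitiveness and compositionality to conclude that the filled context $\contextPar{N}{\enco{P_1}}{\enco{P_2}}$ cannot reach success, and transfer this to the bare context via monotonicity of success-reachability under hole-filling. The two lifting facts you isolate (reductions located in the context part and non-hole unguarded occurrences of $\success$ both survive filling) are exactly the auxiliary observations needed to make the inert-hole reading of $\nReachS$ work.
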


But in contrast to the context introduced in order to encode the match prefix the context $ \contextPar{N}{\hole_1}{\hole_2} $ has always, \ie regardless of a substitution, to unguard its holes on its own.

\begin{lemma}
	\label{lem:parContextUnguard}
	Let \encod be a valid encoding from \piT into \piNM.
	Let $ N \subseteq \mc $ be a finite set of names
	and $ \contextPar{N}{\hole_1}{\hole_2} $ be the context introduced by $ \enco{\cdot} $ to encode the parallel operator.
	Then there is some $ T \in \procPiNoMatch $ such that $ \contextPar{N}{\hole_1}{\hole_2} \steps T $ and $ \hole_1, \hole_2 \in \ungSub{T} $.
\end{lemma}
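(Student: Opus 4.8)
The plan is to prove this directly (rather than by a global contradiction) using success sensitiveness together with Lemma~\ref{lem:parContextCannotReachSuccess}, applied to a source term $ S_1 \mid S_2 $ whose \emph{only} route to success is a synchronisation between the two parallel components. The point of such a counterexample is that neither encoded component can make any progress towards success on its own, so whatever brings both $ \enco{S_1} $ and $ \enco{S_2} $ to top level cannot be the work of the hole contents and must be performed by the parallel context itself. Once I know that the filled term must reach a state in which \emph{both} encoded components are simultaneously unguarded, I lift the corresponding context reductions (which, treating the holes as inert, commute with hole filling) to a context-only computation $ \contextPar{N}{\hole_1}{\hole_2} \steps T $ with $ \hole_1, \hole_2 \in \ungSub{T} $, as required.

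Concretely, assume first that $ N \neq \emptyset $, fix some $ a \in N $, and take $ S_1 = \piInput{a}{y}{\success} $ and $ S_2 = \piOutput{a}{a}{\nullTerm} $, padding both components with guarded occurrences of the remaining names of $ N $ so that $ \freeNames{S_1} = \freeNames{S_2} = N $ (hence the same context $ \contextPar{N}{\hole_1}{\hole_2} $ is used in all configurations below). Here $ S_1 \mid S_2 \reachS $ via the communication on $ a $, whereas $ S_1 \nReachS $, $ S_2 \nReachS $, $ S_1 \mid \nullTerm \nReachS $ and $ \nullTerm \mid S_2 \nReachS $. By success sensitiveness and compositionality this gives $ \contextPar{N}{\enco{S_1}}{\enco{S_2}} \reachS $ while $ \contextPar{N}{\enco{S_1}}{\enco{\nullTerm}} \nReachS $ and $ \contextPar{N}{\enco{\nullTerm}}{\enco{S_2}} \nReachS $. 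Fixing a success-reaching execution of $ \contextPar{N}{\enco{S_1}}{\enco{S_2}} $, I would first observe that both hole contents must participate in it: if $ \enco{S_2} $ never reduced nor communicated along this execution, the very same reduction sequence would be available with $ \enco{S_2} $ replaced by $ \enco{\nullTerm} $, contradicting $ \contextPar{N}{\enco{S_1}}{\enco{\nullTerm}} \nReachS $ (the reached success cannot sit inside the frozen hole, since $ \enco{S_2} \nReachS $), and symmetrically for $ \enco{S_1} $. As a hole content can only reduce or offer a communication while its hole is unguarded, and since the source success is precisely the $ a $-synchronisation of $ S_1 $ and $ S_2 $ (so that, by Lemma~\ref{lem:parContextCannotReachSuccess}, the success-producing interaction is genuinely one between the two encoded components), the execution must reach a state in which $ \hole_1 $ and $ \hole_2 $ are unguarded \emph{simultaneously}.

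The hard part is the attribution step: I must rule out that a hole is unguarded by a communication between a context prefix and one of the hole contents, since an inert-hole reduction $ \contextPar{N}{\hole_1}{\hole_2} \steps T $ cannot perform such a step. I would address this by exploiting that, for this counterexample, each encoded component is stuck with respect to success ($ \enco{S_1} \nReachS $ and $ \enco{S_2} \nReachS $ by success sensitiveness), so that it cannot be a hole content that supplies the trigger unguarding a sibling hole, and by using name invariance together with the disjointness of the context's reserved names from $ \vap(N) $ to exclude such triggers systematically; simultaneity of the two unguardings then follows from the synchronisation that success demands. Having attributed the unguarding to the context, I lift the relevant steps to obtain the claimed $ T $. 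The general finite $ N $ follows by the same argument (the padding makes $ a \in N $ available); the degenerate case $ N = \emptyset $, where the components are necessarily closed and cannot synchronise, is treated separately by showing each hole is unguardable in isolation (placing $ \success $ in one hole and $ \nullTerm $ in the other) and ruling out that the context discards one hole when unguarding the other by appealing to operational completeness against $ \enco{\success \mid \success} $. I expect the attribution step to be the main obstacle and the most technically delicate, exactly because compositionality lets the context depend on $ N $.
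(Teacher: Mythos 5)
Your overall strategy---a source term $S_1 \mid S_2$ whose only route to success is a synchronisation across the parallel operator, combined with success sensitiveness, compositionality and Lemma~\ref{lem:parContextCannotReachSuccess}, plus the comparison against $\contextPar{N}{\enco{S_1}}{\enco{\nullTerm}}$ to show that both hole contents must participate---is the right one and is essentially the strategy of the paper. However, you explicitly defer the two steps on which the lemma actually turns, and what you sketch for them does not close them. First, \emph{simultaneity}: ``both contents participate at some point of the execution'' does not by itself give a single state in which both holes are unguarded; you need the additional (true, but unstated) observation that in \piNM an unguarded subterm can only persist or be discarded by a choice resolution, never be re-guarded, and you must then rule out the interleaving in which one hole is discarded before the other is unguarded. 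Second, and more seriously, \emph{attribution}: your claim that ``it cannot be a hole content that supplies the trigger unguarding a sibling hole'' does not follow from $\enco{S_1} \nReachS$ and $\enco{S_2} \nReachS$. A context of the shape $\hole_1 \mid \piInput{c}{z}{\hole_2}$ with $c = \proj{\vap\!\left( a \right)}{1}$ for some $a \in N$ is not touched by any of the facts you invoke: $c$ is not a reserved name, so disjointness of the renaming policy and name invariance say nothing about it, and an encoding $\enco{S_1}$ of an unsuccessful, even deadlocked, source term is perfectly free to emit on $c$. In such a scenario the filled term reaches a state with both components unguarded, yet no context-only reduction $\contextPar{N}{\hole_1}{\hole_2} \steps T$ with $\hole_1, \hole_2 \in \ungSub{T}$ exists---which is exactly the statement you must prove, since the main theorem instantiates the holes with the match contexts rather than with your $\enco{S_1}, \enco{S_2}$. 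Ruling this out requires a further argument (e.g.\ varying the content of $\hole_1$ over encodings of different source terms with the same free names and exploiting success sensitiveness and operational soundness for each of them), which is precisely the technical content of the lemma and is absent from the proposal.

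Two smaller points. Your treatment of $N = \emptyset$ appeals to operational \emph{completeness} against $\enco{\success \mid \success}$, but $\success \mid \success$ has no reductions, so completeness is vacuous there; the context that nondeterministically unguards one hole while discarding the other is instead excluded by operational \emph{soundness} together with success-respecting $\asymp$ applied to a term such as $\success \mid \nullTerm$. Finally, note that the quantification in success sensitiveness is may-style ($\reachS$), so where you need ``every success-reaching execution uses both components'' rather than ``some execution does'', you will in places need the derived must-style property $\mustReachSuccessFinite$ that the paper extracts from operational soundness, as it does in the proof of Theorem~\ref{thm:noEnc}.
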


Then we need to show that the context $ \contextPar{N}{\hole_1}{\hole_2} $ cannot bind the names that are used by the context $ \contextMatch{a}{b}{N \cup \Set[]{ a, b }}{\hole} $ to simulate the test for equality. As in the above example $ \piOutput{x}{a}{\nullTerm} \mid \piInput{x}{b}{\match{a}{b}\success} $, a communication step can unify at runtime the variables of a match prefix. Such a communication step naturally transmits the value for the match variables over a parallel operator, because communication is always between two communication partners that are composed in parallel. If this value is restricted on either side of the parallel operator the communication could not lead to the required unification. The match variables would still be considered as different and the match prefix as not satisfied. Thus for example neither $ \Res*{a}{\piOutput{x}{a}{\nullTerm}} \mid \piInput{x}{b}{\match{a}{b}\success} $ nor $ \piOutput{x}{a}{\nullTerm} \mid \Res*{a}{\piInput{x}{b}{\match{a}{b}\success}} $ reach success although in both cases the communication on $ x $ is still possible. Of course the term $ \Res*{a}{\piOutput{x}{a}{\nullTerm} \mid \piInput{x}{b}{\match{a}{b}\success}} $ reaches success. But for cases like this we can construct larger counterexamples as $ \Res*{a}{\piOutput{x}{a}{\nullTerm} \mid \nullTerm} \mid \piInput{x}{b}{\match{a}{b}\success} $ and to analyse the source term, in order to examine the places at which such a restriction would be allowed, violates the idea of a compositional encoding. Again name invariance allows us to ignore the communication on $ x $ and to directly concentrate on the induced substitution.

\begin{lemma}
	\label{lem:parContextNotRes}
	Let \encod be a valid encoding from \piT into \piNM.
	Let $ N \subseteq \mc $ be a finite set of names
	and $ \contextPar{N}{\hole_1}{\hole_2} $ be the context that is introduced by $ \enco{\cdot} $ to encode the parallel operator.
	Then $ \proj{\vap\!\left( a \right)}{i} \in \freeNames{\contextPar{N}{\enco{P}}{\enco{Q}}} $ for all $ P, Q \in \procPi $ and all $ a \in \freeNames{P \mid Q} $,
	where $ i \in \Set{ 1, \ldots, \length{\vap\!\left( a \right)} } $ is the index that exists according to Lemma~\ref{lem:matchReqCom}.
\end{lemma}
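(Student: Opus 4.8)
The plan is to argue by contradiction, exploiting that $ \contextPar{N}{\hole_1}{\hole_2} $ depends only on $ N $ and is therefore shared by every source term whose free names are exactly $ N $. So assume that for some $ P, Q $ with $ N = \freeNames{P \mid Q} $ and some $ a \in N $ we have $ \proj{\vap(a)}{i} \notin \freeNames{\contextPar{N}{\enco{P}}{\enco{Q}}} $. I would first split this into two failure modes: either $ \contextPar{N}{\hole_1}{\hole_2} $ \emph{captures} $ \proj{\vap(a)}{i} $ by a restriction placed around one of its holes, or $ \proj{\vap(a)}{i} $ is \emph{absent} altogether. The second mode I would dispatch with Lemma~\ref{lem:matchReqCom}: since it guarantees that every match context $ \contextMatch{a}{b}{M \cup \Set[]{a,b}}{\hole} $ exposes $ \proj{\vap(a)}{i} $ as a free link regardless of $ M $ and of the hole, inserting (a companion source term whose encoding is) such a match on $ a $ forces $ \proj{\vap(a)}{i} $ to occur free, so the genuinely dangerous mode is capture by the context.

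For the capture mode I would build a counterexample $ S = S_1 \mid S_2 $ with $ \freeNames{S} = N $ in which $ S_2 $ contains $ \match{a}{b}\success $ (with $ b $ a suitable second free name) guarding the sole occurrence of success, while $ S_1 $ is a communication partner so that $ S $ performs one reduction whose induced substitution unifies $ a $ and $ b $; the archetype is $ \piOutput{x}{a}{\nullTerm} \mid \piInput{x}{b}{\match{a}{b}\success} $ from the discussion preceding Lemma~\ref{lem:matchReqCom}. Then $ S \reachS $, so by success sensitiveness $ \enco{S} = \contextPar{N}{\enco{S_1}}{\enco{S_2}} \reachS $, with the \emph{same} context as in the assumption. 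Lemmas~\ref{lem:parContextCannotReachSuccess} and \ref{lem:contextCannotReachSuccess} say that neither the parallel nor the match context reaches success on its own, so the witnessing run must unguard the match hole; and by Lemma~\ref{lem:matchReqCom} that hole stays guarded until some substitution unifies the two links $ \proj{\vap(a)}{i} $ and $ \proj{\vap(b)}{i} $. Invoking name invariance exactly as sketched in the running text, I would discard the encoded communication on $ x $ and work directly with the induced target substitution $ \sigma' = \Set{ \subst{\proj{\vap(a)}{i}}{\proj{\vap(b)}{i}} } $, reducing the question to whether $ \sigma' $ can genuinely merge these two links inside the context; Lemma~\ref{lem:parContextUnguard} ensures the holes themselves do get exposed so that the match context and the transmitting partner can actually interact.

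The crux, and what I expect to be the main obstacle, is the scope argument. By Lemma~\ref{lem:matchReqCom} the unguarding step is a communication between an output on $ \proj{\vap(a)}{i} $ and an input on $ \proj{\vap(b)}{i} $ living inside the match context in the hole that holds $ \enco{S_2} $. If $ \contextPar{N}{\hole_1}{\hole_2} $ restricts $ \proj{\vap(a)}{i} $ around that hole, then the name $ \sigma' $ feeds in for $ \proj{\vap(b)}{i} $ is the \emph{global, free} $ \proj{\vap(a)}{i} $ arriving from the other hole, whereas the output link supplied by the match context is the \emph{bound} copy under the restriction; alpha-converting the restriction to a fresh name keeps the two apart, so the match communication can never fire, the hole stays guarded, and by Lemmas~\ref{lem:parContextCannotReachSuccess} and \ref{lem:contextCannotReachSuccess} success is unreachable, contradicting $ \enco{S} \reachS $. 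The symmetric situation, a restriction around the \emph{transmitting} hole, is handled through scope extrusion: pushing the restricted name out of its scope to deliver it as a value renames it apart from the match context's link, again defeating unification. Making both sub-cases precise, namely the alpha-conversion and extrusion bookkeeping together with the name-invariance step that legitimately turns the communication on $ x $ into the clean $ \sigma' $ while keeping $ \freeNames{S} = N $ so that the \emph{same} $ \contextPar{N}{\hole_1}{\hole_2} $ is reused, is the technically demanding part. Once capture around either hole is excluded, the free occurrence furnished by Lemma~\ref{lem:matchReqCom} yields $ \proj{\vap(a)}{i} \in \freeNames{\contextPar{N}{\enco{P}}{\enco{Q}}} $, and since $ P, Q $ and $ a \in \freeNames{P \mid Q} $ were arbitrary, the lemma follows.
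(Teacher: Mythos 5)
Your overall strategy is the one the paper's narrative lays out: craft a source term containing $\match{a}{b}\success$ whose success hinges on a substitution unifying the match variables, use Lemmas~\ref{lem:contextCannotReachSuccess}, \ref{lem:parContextCannotReachSuccess} and \ref{lem:matchReqCom} to argue that its encoding can only reach success via a communication on $\proj{\vap\!\left( a \right)}{i}$ and $\proj{\vap\!\left( b \right)}{i}$, and conclude that if $\contextPar{N}{\hole_1}{\hole_2}$ bound one of these links, the induced target substitution could not unify them (a substitution only touches free occurrences), so success would be unreachable, contradicting success sensitiveness. That core mechanism is correctly identified, as is the role of Lemma~\ref{lem:parContextUnguard}.

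There are, however, two concrete gaps. First, the quantification over arbitrary $P, Q$. Your argument can at best establish that the context does not \emph{capture} $\proj{\vap\!\left( a \right)}{i}$ around the hole into which a match encoding is placed; but the stated conclusion $\proj{\vap\!\left( a \right)}{i} \in \freeNames{\contextPar{N}{\enco{P}}{\enco{Q}}}$ also requires the name to actually \emph{occur} free somewhere in that term. Lemma~\ref{lem:matchReqCom} supplies a free occurrence only inside a match context, so for, say, $P = \piOutput{a}{a}{\nullTerm}$ and $Q = \nullTerm$ your sketch produces no occurrence of $\proj{\vap\!\left( a \right)}{i}$ at all; you need either that the parallel context itself mentions $\proj{\vap\!\left( a \right)}{i}$ freely, or an auxiliary fact that $\enco{R}$ has $\proj{\vap\!\left( a \right)}{i}$ free whenever $a \in \freeNames{R}$, and you argue neither. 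Second, the bookkeeping you yourself flag as technically demanding is exactly where the argument threatens to break: $\piOutput{x}{a}{\nullTerm} \mid \piInput{x}{b}{\match{a}{b}\success}$ has free names $\Set{x,a,b}$, which in general differs from $N$, so compositionality hands you a \emph{different} context $\contextPar{\Set[]{x,a,b}}{\hole_1}{\hole_2}$ and the ``same context'' premise of your contradiction evaporates. The route the paper indicates avoids this entirely: drop the auxiliary channel $x$, take a counterexample $S = S_1 \mid S_2$ with $\freeNames{S} = N$ and $\match{a}{b}\success$ in one component for some $b \in N$, and apply name invariance with the non-injective $\sigma = \Set{\subst{a}{b}}$; then $\sigma\!\left( S \right) \reachS$ forces $\sigma'\!\left( \enco{S} \right) \reachS$, and the only remaining question is whether $\sigma'$ can reach the occurrences of $\proj{\vap\!\left( b \right)}{i}$ that Lemma~\ref{lem:matchReqCom} locates inside the match context, which it cannot if the parallel context has bound them. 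This also disposes of your scope-extrusion sub-case, which only arises because of the explicit communication on $x$; and you would still have to treat the degenerate case where $N$ offers no second name $b \neq a$.
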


Finally we show that there is no valid encoding from \piT into \piNM, by assuming the contrary and deriving a contradiction. As already mentioned, we use a counterexample that consists of two parallel composed match prefixes. More precisely we use $ \match{a}{b}\success \mid \match{b}{a}\success $, \ie swap the match variables on the right side. Intuitively the contradiction is derived as follows: Since the context $ \contextMatch{a}{b}{N \cup \Set[]{ a, b }}{\hole} $ translates the match variables into free links of unguarded communication partners and because of the swapping of the matching variables on the right side, the parallel composition of the two variants of the context $ \contextMatch{a}{b}{N \cup \Set[]{ a, b }}{\hole} $|that are necessary to encode the counterexample|enable wrong communication steps between a communication partner from the left $ \contextMatch{a}{b}{N \cup \Set[]{ a, b }}{\hole} $ and a communication partner from the right $ \contextMatch{b}{a}{N \cup \Set[]{ a, b }}{\hole} $. We denote such a communication step as wrong, because in this case the communication cannot lead to the unguarding of the encoded continuation $ \enco{\success} $ without violating success sensitiveness. In order to reach success, the source term needs a substitution $ \sigma $ that unifies the match variables. Unfortunately, the same wrong communication can consume one of the communication partners in $ \enco{\sigma\left( \match{a}{b}\success \mid \match{b}{a}\success \right)} $ that is necessary to unguard the encoded continuation. A restoration of this communication partner leads by symmetry to divergence which violates the divergence reflection criterion. But without the possibility of a restoration, the wrong communication leads to an unsuccessful execution of $ \enco{\sigma\left( \match{a}{b}\success \mid \match{b}{a}\success \right)} $. This execution violates the combination of success sensitiveness and operational soundness.

\begin{theorem}
	\label{thm:noEnc}
	There is no valid encoding from \piT into \piNM.
\end{theorem}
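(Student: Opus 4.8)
The plan is to argue by contradiction. I would assume that a valid encoding \encod from \piT into \piNM exists and work with the source term $S = \match{a}{b}\success \mid \match{b}{a}\success$ for two distinct names $a \neq b$, which \emph{swaps} the match variables on the right. Since neither guard is satisfied, $S \noStep$, so $S \nReachS$ and $S$ has no infinite execution; by success sensitiveness $\enco{S} \nReachS$. Now let $\sigma$ be a substitution with $\sigma(a) = \sigma(b) = c$. Then $\sigma(S) = \match{c}{c}\success \mid \match{c}{c}\success$ is successful and, up to $\equiv$, admits no reduction step, so every $S'$ with $\sigma(S) \steps S'$ satisfies $S' \hasS$. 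By name invariance (with $\sigma$ non-injective) $\enco{\sigma(S)} \asymp \sigma'(\enco{S})$, where the induced $\sigma'$ maps, componentwise, both $\vap(a)$ and $\vap(b)$ to $\vap(c)$; in particular it unifies the links $\proj{\vap(a)}{i}$ and $\proj{\vap(b)}{i}$ for every index $i$.

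Next I would expose the structure of $\enco{S}$. By compositionality $\enco{S} = \contextPar{\Set[]{a,b}}{\contextMatch{a}{b}{\Set[]{a,b}}{\enco{\success}}}{\contextMatch{b}{a}{\Set[]{a,b}}{\enco{\success}}}$, and applying name invariance to the injective swap $\rho = \Set{ \subst{b}{a}, \subst{a}{b} }$ gives $\enco{\match{b}{a}\success} = \rho'(\enco{\match{a}{b}\success})$, where $\rho'$ swaps $\vap(a)$ and $\vap(b)$ componentwise, so the right context is, up to this renaming, a copy of the left one. By Lemma \ref{lem:matchReqCom} there is an index $i$ such that the left context $\contextMatch{a}{b}{\Set[]{a,b}}{\hole}$ can reduce to a state exposing an unguarded, free input and output on $\proj{\vap(a)}{i}$ and $\proj{\vap(b)}{i}$, and cannot unguard its hole until a substitution unifies these two links; by $\rho'$ the right context $\contextMatch{b}{a}{\Set[]{a,b}}{\hole}$ exposes the corresponding partners with the link names swapped, i.e.\ on $\proj{\vap(b)}{i}$ and $\proj{\vap(a)}{i}$. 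By Lemma \ref{lem:parContextUnguard} the parallel context unguards both holes on its own, and by Lemma \ref{lem:parContextNotRes} it leaves $\proj{\vap(a)}{i}$ and $\proj{\vap(b)}{i}$ free. Hence $\enco{S}$ reaches a state where the left input and the right output sit on the same free name $\proj{\vap(a)}{i}$, and symmetrically on $\proj{\vap(b)}{i}$ — these are the \emph{wrong} communications pairing a partner of the left match encoding with a partner of the right one.

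Finally I would split on whether the match contexts can restore a consumed communication partner. If they can, then the wrong communications can be fired and restored indefinitely already in $\enco{S}$ (their link names coincide, so no substitution is needed), yielding $\enco{S} \step^{\omega}$; as $S$ has no infinite execution, this violates divergence reflection. If they cannot, I would work in $\sigma'(\enco{S}) \asymp \enco{\sigma(S)}$, where $\sigma'$ has unified $\proj{\vap(a)}{i}$ and $\proj{\vap(b)}{i}$ to a single name. Firing one wrong communication, say between the left input and the right output, consumes the left input (which, by Lemma \ref{lem:matchReqCom}, is exactly what would unguard the left continuation once the links coincide) and the right output (needed analogously for the right continuation); with no restoration possible, both copies of $\enco{\success}$ stay guarded forever, and as the contexts cannot reach success on their own (Lemmas \ref{lem:contextCannotReachSuccess} and \ref{lem:parContextCannotReachSuccess}) the resulting term $T'$ satisfies $T' \nReachS$. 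By the reduction bisimulation there is a matching reduct $\enco{\sigma(S)} \steps T''$ with $T' \asymp T''$, whence $T'' \nReachS$. But operational soundness together with success sensitiveness forces every reduct of $\enco{\sigma(S)}$ to reach success, since every $S'$ reachable from $\sigma(S)$ is successful; in particular $T'' \reachS$, a contradiction.

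I expect the main obstacle to be this last case: pinning down that a single wrong communication genuinely blocks \emph{both} continuations, so that the reached target state can no longer reach success, and ruling out the restoration escape without appealing to the shape of the hole's contents — which is precisely why the divergence-by-symmetry argument is needed as the complementary case, and why threading operational soundness, success sensitiveness and the bisimulation $\asymp$ together, to transport the unsuccessful reduct from $\sigma'(\enco{S})$ to $\enco{\sigma(S)}$, must be carried out with care.
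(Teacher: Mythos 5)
Your proposal is correct and follows essentially the same route as the paper's own proof: the same counterexample $\match{a}{b}\success \mid \match{b}{a}\success$, the same invocation of Lemmas~\ref{lem:contextCannotReachSuccess}--\ref{lem:parContextNotRes} to expose the ``wrong'' communications, and the same dichotomy on restoration yielding either a violation of divergence reflection or of operational soundness combined with success sensitiveness. The only deviations are cosmetic: you derive the divergence already in $\enco{S}$ rather than in $\enco{\sigma(S)}$, and you phrase the final contradiction via every reduct of $\enco{\sigma(S)}$ having to reach success instead of via $\mustReachSuccessFinite$.
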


\begin{proof}[Proof Sketch]
	Assume the contrary, \ie assume that there is a valid encoding \encod from \piT into \piNM.
	Consider the term $ S = \match{a}{b}\success \mid \match{b}{a}\success $ and a substitution $ \sigma : \mc \to \mc $ such that $ \sigma(a) = \sigma(b) $.
	
	By Lemma~\ref{lem:parContextUnguard}, there is some $ T $ such that $ \enco{S} \steps T $ and $ \contextMatch{a}{b}{\Set[]{ a, b }}{\enco{\success}}, \contextMatch{b}{a}{\Set[]{ a, b }}{\enco{\success}} \in \ungSub{T} $.
	Because $ \freeNames{\match{a}{b}P} = \freeNames{\match{b}{a}P} $ for all $ P $, $ \contextMatch{a}{b}{\Set[]{ a, b }}{\hole} = \Set{ \subst{\vap(b)}{\vap(a)}, \subst{\vap(a)}{\vap(b)} }\!\left( \contextMatch{b}{a}{\Set[]{ a, b }}{\hole} \right) $.
	By Lemma~\ref{lem:matchReqCom}, then there are $ i \in \Set{ 1, \ldots, \length{\vap(a)}} $, $ T' $, $ \context{C_1}{\hole} $, $ \context{C_2}{\hole} $, $ \context{C_3}{\hole} $, and $ \context{C_4}{\hole} $ such that $ \enco{S} \equiv\steps T' $ and $ T' $ has an unguarded input as well as an unguarded output on both of the channels $ \proj{\vap\!\left( a \right)}{i} $ and $ \proj{\vap\!\left( b \right)}{i} $.
	
	By name invariance, $ \enco{\sigma\!\left( S \right)} \asymp \sigma'\left( \enco{S} \right) = \sigma'\left( \contextPar{\Set[]{ a, b }}{\contextMatch{a}{b}{\Set[]{ a, b }}{\enco{\success}}}{\contextMatch{b}{a}{\Set[]{ a, b }}{\enco{\success}}} \right) $, where \linebreak $ \vap\!\left( \sigma\!\left( n \right) \right) = \sigma'\!\left( \vap\!\left( n \right) \right) $ for every $ n \in \mc $.
	Hence $ \sigma'\left( \enco{S} \right) \equiv\steps \sigma'\!\left( T' \right) $, \ie the inputs on the channels $ \proj{\vap\!\left( a \right)}{i} $ and $ \proj{\vap\!\left( b \right)}{i} $ can communicate between the two instances of the context $ \contextMatch{\cdot}{\cdot}{\Set[]{ a, b }}{\cdot} $ in $ \sigma'\!\left( T' \right) $.
	By the argumentation above these communications, \ie there an input from the left $ \contextMatch{a}{b}{\Set[]{ a, b }}{\enco{\success}} $ interacts with an output from the right $ \contextMatch{b}{a}{\Set[]{ a, b }}{\enco{\success}} $ or vice versa, cannot lead to the unguarding of $ \enco{\success} $.
	Note that if either the left $ \contextMatch{a}{b}{\Set[]{ a, b }}{\enco{\success}} $ or the right $ \contextMatch{b}{a}{\Set[]{ a, b }}{\enco{\success}} $ restores a wrongly consumed input term or output term on $ \proj{\vap\!\left( a \right)}{i} $ or $ \proj{\vap\!\left( b \right)}{i} $ then, because $ \contextMatch{a}{b}{\Set[]{ a, b }}{\hole} = \Set{ \subst{\vap(b)}{\vap(a)}, \subst{\vap(a)}{\vap(b)} }\!\left( \contextMatch{b}{a}{\Set[]{ a, b }}{\hole} \right) $, there is an execution there the other side also restores the corresponding counterpart. This leads back to the state before the respective communication step between the left $ \contextMatch{a}{b}{\Set[]{ a, b }}{\enco{\success}} $ and the right $ \contextMatch{b}{a}{\Set[]{ a, b }}{\enco{\success}} $ and thus to a divergent execution.
	The same holds if the context $ \contextPar{\Set[]{ a, b }}{\hole_1}{\hole_2} $ restores such an input term or output term.
	But since $ \sigma\!\left( S \right) $ has no divergent execution and because of divergence reflection, a divergent execution of $ \enco{\sigma\!\left( S \right)} $ violates our assumption that \encod is a valid encoding.
	Thus $ \sigma'\!\left( \enco{S} \right) $ cannot restore a wrongly consumed input term or output term on $ \proj{\vap\!\left( a \right)}{i} $ or $ \proj{\vap\!\left( b \right)}{i} $.
	
	By Lemma~\ref{lem:matchReqCom}, only a communication between the terms on channels $ \proj{\vap\!\left( a \right)}{i} $ and $ \proj{\vap\!\left( b \right)}{i} $ can unguard the continuation $ \enco{\success} $. Hence there is a finite maximal execution of $ \enco{\sigma\!\left( S \right)} $ in which the continuation $ \enco{\success} $ is never unguarded.
	Thus, by Lemma~\ref{lem:contextCannotReachSuccess} and Lemma~\ref{lem:parContextCannotReachSuccess}, no success is reached in this execution, \ie $ \enco{\sigma\!\left( S \right)} \not\mustReachSuccessFinite $.
	But $ \sigma\!\left( S \right) \mustReachSuccessFinite $ implies $ \enco{\sigma\!\left( S \right)} \mustReachSuccessFinite $.
	This is a contradiction.
\end{proof}


\section{Discussion}
\label{sec:discussion}

As mentioned above, also Carbone and Maffeis show in \cite{carbone} that the match prefix cannot be encoded within the \piCal. Moreover there are different encodings of the match prefix in modified variants and extensions of the \piCal. In this section we discuss the relation between these results and our separation result.

\subsection{The Match Prefix is a Native Operator of the Pi-Calculus}
\label{sec:comparison}

If we compare the approach in \cite{carbone} with ours, we observe that the considered variants of the \piCal are different.
We consider the full \piCal and its variant without the match prefix as source and target language.
In the literature there are different variants called ``full'' \piCal. We decide on the most general of these variants. In particular we consider a variant of the \piCal with free choice whereas \cite{carbone} allow only guarded choice in their target language.
Note that the source language considered in \cite{carbone} is an asynchronous variant of the \piCal, \ie is less expressive than the source language considered here \cite{palamidessi03,petersNestmann14,petersNestmannGoltz13}.
However, the only (counter)examples we use here are of the form $ \match{a}{b}X $, or $\match{a}{b}X \mid \match{b}{a}X$ where $ X $ is a combination of $ \success $, $ \nullTerm $, $ P $, and parallel composition for an arbitrary $ P $ with a fixed set of free names. Thus, our separation result remains valid if we change the source language to the asynchronous variant of the \piCal without choice that is used in \cite{carbone}.
Our target language is also more expressive, because we do not restrict it to guarded choice. More precisely, in \cite{carbone} the \piCal with guarded mixed choice is used. Accordingly, the current result can be considered stronger. However, concentrating only on guarded choice is commonly accepted and, more importantly, it might be easy to adapt the proof in \cite{carbone} to the more expressive target language.

\paragraph{Contribution 1.}
The main difference between the two approaches are the quality criteria, \ie in the conditions that are assumed to hold for all valid encodings.
Similar to \cite{palamidessi03}, Carbone and Maffeis require that an encoding must be uniform and reasonable.
By \cite{carbone} an encoding $\enco{\cdot}$ is \emph{uniform} if it translates the parallel operator homomorphically, \ie $ \enco{P \mid Q} = \enco{P}\mid \enco{Q} $, and if it respects permutations on free names, \ie for all $ \sigma $ there is some $ \theta $ such that $ \enco{\sigma(P)} = \theta(\enco{P}) $.
A \emph{reasonable} semantics, by \cite{carbone}, is one which distinguishes two processes $ P $ and $ Q $ whenever there exists a maximal execution of $ Q $ in which the observables are different from the observables in any maximal execution of $ P $.
Furthermore they require that an encoding should be able to distinguish deadlocks from livelocks, which is comparable to divergence reflection.

In contrast to uniformity, name invariance relates the substitution on the source term names with its translation on target term names. Already \cite{gorla} points out that name invariance is a more complex requirement than the above condition; but \cite{gorla} also argues that it is rather more detailed than more demanding. Moreover we claim that name invariance is not crucial for the above separation result.
The first condition of uniformity is a strictly stronger requirement than compositionality for the parallel operator as it is discussed for instance in \cite{petphd}. However the proof in \cite{carbone} does not use the homomorphic translation of the parallel operator.

The criterion on the reasonable semantics used in \cite{carbone} is even more demanding than the first part of uniformity. It states that a source term and its encoding reach exactly the same observables.
It completely ignores the possibility to translate a source term name into a sequence of names or to simulate a source term observable by a set of target term observables even if there is a bijective mapping between an observable and its translation. The proof in \cite{carbone} makes strongly use of this criterion; exploiting the fact that the match variables are free in the match prefix.
Gorla suggests success sensitiveness and operational correspondence instead. Note that we use operational correspondence|or more precisely soundness|only in the last step of the proof to argument that if a source term reaches success in all finite maximal executions its encoding does alike. Hence, for the presented case, the combination of operational soundness and success sensitiveness is a considerably weaker requirement than the variant of reasonableness.

Overall we conclude that, because of the large difference between success sensitiveness and the variant of reasonableness considered in \cite{carbone}, our set of criteria is considerably weaker and thus the presented result is strictly stronger.

\paragraph{Contribution 2.}
The proof in \cite{carbone} is, due to the stricter criteria, shorter and easier to follow than ours. But it also reveals less information on the reason for the separation result. In contrast, the presented approach reflects the intuition that communication is close to the behaviour of the match prefix. We show that among the native operators of the \piCal input and output are the only operators close enough to possibly encode the match operator, where the link names result from the translation of the match variables. But it also reveals the reason why communication is not strong enough.
Translated match variables have to be free in the encoding of the match {prefix}|{to} allow for a guarding input to receive a value for a match {variable}|{but} they also have to be {bound}|{to} avoid unintended interactions between the translated match variables of parallel match encodings.
The other \piCal operators cannot simulate this kind of binding.

\subsection{Encodings of the Match Prefix in Pi-Like Calculi}
\label{sec:encodeMatchInOtherCalculi}

As mentioned in the introduction there are some modifications and extensions of the \piCal that allow for the encoding of the match prefix. We briefly discuss four different approaches and their relation to our separation result.

In \cite{bodeiDeganoPriami05} the input prefix $ \piIn{x}{z} $ of the \piCal is replaced by a selective input $ \piIn{x}{z \in V} $. A term guarded by $ \piIn{x}{z \in V} $ and a term guarded by a matching output prefix $ \piOut{x}{y} $ can communicate (if they are composed in parallel and) only if the transmitted value $ y $ is contained in  the set $ V $ of names specified in the selective input prefix. Accordingly selective input can be used as a conditional guard. As pointed out in \cite{bodeiDeganoPriami05}, selective input allows to encode a match prefix $ \match{a}{b} P $ simply by $ \Res{x}{\left( \piOut{x}{a} \mid \piInput{x}{y \in \Set[]{ b }}{\left\lbr P \right\rbr} \right)} $, where $ \left\lbr P \right\rbr $ is the encoding of $ P $. Here the test for equality $ a = b $ is transferred into the test $ a \in \Set[]{ b } $. Thus it is not necessary to translate the match variables into communication channels, which allows for this simple encoding.

Mobile ambients \cite{cardelliGordon00} extend the asynchronous \piCal with ambients $ n\!\left[ \; \right] $, \ie sides or locations, that
\begin{inparaenum}[(a)]
	\item can contain processes and other ambients,
	\item can be composed in parallel to other ambients and processes, and
	\item whose name can be restricted to forbid interaction with its environment.
\end{inparaenum}
Moreover there are three additional actions prefixes:
\begin{inparaenum}[(1)]
	\item $ \textsf{in} \, n $ allows an ambient to enter another ambient named $ n $ by the rule $ m\!\left[ \textsf{in} \, n.P \mid Q \right] \mid n\!\left[ R \right] \step n\!\left[ m\!\left[ P \mid Q \right] \mid R \right] $,
	\item $ \textsf{out} \, n $ allows an ambient to exit its own parent named $ n $ by the rule $ n\!\left[ m\!\left[ \textsf{out} \, n.P \mid Q \right] \mid R \right] \step m\!\left[ P \mid Q \right] \mid n\!\left[ R \right] $, and
	\item $ \textsf{open} \, n $ dissolves an ambient with name $ n $ by the rule $ \textsf{open} \, n.P \mid n\!\left[ Q \right] \step P \mid Q $.
\end{inparaenum}
As a consequence, communication steps become locale, \ie can occur only if both communication partners are located in parallel within the same ambient. Hence channel names become superfluous, since communications on different channels can be simulated by communications within different ambients. So the $ \pi $-input $ \piInput{x}{z}{P} $ is replaced by $ \left( z \right)\!.P $ and the asynchronous output $ \piOut{x}{y} $ is replaced by $ \left\langle y \right\rangle $.
As pointed out in \cite{vig}, mobile ambients can encode the match prefix. They suggest to encode a match prefix $ \match{a}{b}P $ by the term $ M = \Res*{xy}{x\!\left[ \textsf{open} \, a.y\!\left[ \textsf{out} \, x \right] \mid b\!\left[ \; \right] \right] \mid \textsf{open} \, y.\textsf{open} \, x.\left\lbr P \right\rbr} $, where $ \left\lbr P \right\rbr $ is the encoding of $ P $.
Since there are no channel names, the match variables are translated into the new capabilities of mobile ambients, namely into $ \textsf{open} \, a $ and an ambient with name $ b $. $ \textsf{open} \, a $ can only be reduced if $ a = b $, \ie if either $ a = b $ holds from the beginning or if $ a $ and $ b $ are unified by a substitution induced by a surrounding input, as \eg in $ \left( b \right)\!.M \mid \left\langle a \right\rangle $. Note that, to enable this substitution, the match variables $ a $ and $ b $ {have}|{as} shown in our proof {above}|{to} be translated into free names. Here the ambient $ x $ and its restriction ensure that there are no unintended interactions between the translated match variables of parallel match encodings. More precisely the ambient $ x $ encapsulates the translation of the test for equality $ a = b $ and the restriction $ \Res{x}{} $ ensures that the environment cannot interfere, \ie no other action on the names $ a $ or $ b $ can reduce the $ \textsf{open} \, a $ or can target the ambient $ b $ inside of $ x $, because the restriction forbids other processes to enter $ x $. So in mobile ambients it is not necessary to translate the match variables into bound names, which allows for the encoding.

\cite{vivas} extend the pi-calculus with an additional operator $ P \setminus z $ called blocking. Blocking forbids for $ P $ to perform a visible action with the blocked name $ z $ as subject or bound object. By \cite{vivas} this allows to encode a match prefix $ \match{a}{b}P $ by the term $ \Res*{w}{\left( \piOutput{a}{y}{\nullTerm} \mid \piInput{b}{z}{\piOutput{w}{y}{\nullTerm}} \right) \setminus a \setminus b \mid \piInput{w}{z}{\left\lbr P \right\rbr}} $, where $ \left\lbr P \right\rbr $ is the encoding of $ P $ and $ z \notin \freeNames{P} $. As suggested by our proof above, the match prefix is translated into a communication and the match variables are translated into the channel names of the respective communication partners. To communicate the channel names have to be equal, \ie again either $ a = b $ holds from the beginning or $ a $ and $ b $ have to be unified by a substitution induced by a surrounding input. To enable such a substitution, the match variables $ a $ and $ b $ {have}|{as} shown in our proof {above}|{to} be translated into free names. Here the new blocking operator ensures that there are no unintended interactions between the translated match variables of parallel match encodings. More precisely $ M \setminus a \setminus b $ ensures that $ M $ cannot interact with another term over $ a $ or $ b $|{thus} blocking behaves as a binding operator \wrt reduction {steps}|{but} blocking does not bind the names $ a $ and $ b $ such that they can be affected by substitution.
Thus our proof explicitly reveals the features that due to \cite{vivas} allow to encode the match prefix by means of blocking.

\cite{carbone} extends the \piCal by so-called polyadic synchronisation, \ie instead of single names as in the \piCal channel names can be constructed by combining several names. Thus \eg in the variant of the \piCal with polyadic synchronisation, where each channel name consists of exactly two names, the input prefix becomes $ \piIn{x_1 \cdot x_2}{z} $ and the (matching) output prefix becomes $ \piOut{x_1 \cdot x_2}{y} $. An input and an output guarded term (that are composed in parallel) can communicate if the composed channel names are equal. By \cite{carbone} this extension allows to encode the match prefix. They suggest to translate $ \match{a}{b}P $ by $ \Res*{x}{\piOut{x \cdot b}{y} \mid \piInput{x \cdot a}{z}{\left\lbr P \right\rbr}} $, where $ \left\lbr P \right\rbr $ is the encoding of $ P $ and $ x, z \notin \freeNames{P} $. Again, as suggested by our proof above, the match prefix is translated into a communication and the match variables are translated into (parts of) the channel names of the respective communication partners. But polyadic synchronisation allows to combine the free match {variables}|{used} to allow for a guarding input to receive a {value}|{and} the bound name $ x $|{used} to avoid unintended interactions between the translated match variables of parallel match {encodings}|{within} a single communication channel.
Again our proof explicitly reveals the features that due to \cite{carbone} allow to encode the match prefix by means of polyadic synchronisation.

\section{Conclusions}
\label{sec:conclusions}

We provide a novel separation result showing that there is no valid encoding from the full \piCal into its variant without the match prefix. In contrast to the former approach in \cite{carbone} we strengthen the result in two ways:
\begin{compactenum}
	\item We considerably weaken the set of requirements, in particular with respect to the criterion that is called reasonable semantics in \cite{carbone}. Instead, we use the framework of criteria designed by Gorla for language comparison.
	\item The so obtained proof reflects our intuition on the match prefix and reveals the problem that prevents its encoding. A valid encoding of the match prefix would need to translate the prefix into a (set of) communication step(s) on links that result from the translation of the match variables. These links have to be free|to allow for a guarding input to receive a value for a match variable|but they also have to be bound|to avoid unintended interactions between parallel match encodings. This kind of binding cannot be simulated by a \piCal operator different from the match prefix.
\end{compactenum}
This further underpins that the match prefix cannot be derived in the \piCal.

In Section~\ref{sec:encodeMatchInOtherCalculi} we discuss four modifications and extensions of the \piCal that allow to encode the match prefix. 
In the first encoding approach the match prefix is replaced by another (more general) conditional guard. But the other approaches use extensions or modifications of the \piCal to encode the match prefix by using features that allow to circumvent the binding problem in the encoding of the match prefix that is pointed out in our proof.
Thus further works can use the here presented explicit formulation of the reason, that forbids for encodings of the match prefix in the \piCal, to encode the match prefix in other calculi.

\addcontentsline{toc}{section}{References}
\bibliographystyle{eptcs}
\bibliography{references}

\end{document}